\theoremstyle{definition}
\newtheorem{theorem}{Theorem}
\newcolumntype{P}[1]{>{\centering\arraybackslash}p{#1}}
\definecolor{redcolor}{rgb}{1,0,0}
\definecolor{bluecolor}{rgb}{0,0,1}
\begin{document}
\title{Closed-form solutions for the Salpeter equation}

\author{Fernando Alonso-Marroquin}
\affiliation{CIPR, King Fahd University of Petroleum and Minerals, Dhahran, 31261, Kingdom of Saudi Arabia}
\email[Corresponding Email: ]{fernando.marroquin@kfupm.edu.sa}
	
\author{Yaoyue Tang}
\affiliation{Modelling and Simulation Research Group, The University of Sydney, Sydney NSW 2006, Australia} 	

 \author{Fatemeh Gharari}
\affiliation{Department of Statistics and Computer Science, University of Mohaghegh Ardabili, Ardabil, Iran}

\author{M. N. Najafi}
\affiliation{Department of Physics, University of Mohaghegh Ardabili, P.O. Box 179, Ardabil, Iran}
\email{morteza.nattagh@gmail.com}

\begin{abstract}
We propose integral representations and analytical solutions for the propagator of the $1+1$ dimensional Salpeter Hamiltonian, describing a relativistic quantum particle with no spin. We explore the exact Green function and an exact solution for a given initial condition, and also find the asymptotic solutions in some limiting cases. The analytical extension of the Hamiltonian in the complex plane allows us to formulate the equivalent stochastic problem, namely the B\"aumer equation. This equation describes \textit{relativistic} stochastic processes with time-changing anomalous diffusion.  This B\"aumer propagator corresponds to the Green function of a relativistic diffusion process that interpolates between Cauchy distributions for small times and Gaussian diffusion for large times, providing a framework for stochastic processes where anomalous diffusion is time-dependent.
\end{abstract}

\keywords{quantum relativistic mechanics, anomalous diffusion}

\maketitle

\section{introduction}\label{sec:introduction}
The importance of the relativistic quantum mechanics (RQM) is not only due to the requirements of the standard model of elementary particles~\cite{greiner2000relativistic}, but also as an essential ingredient for describing the new phases of matter, like two-dimensional electron gas~\cite{melrose1984dispersion}, graphene~\cite{novoselov2005two}, relativistic plasmas~\cite{umstadter2001review,umstadter1996nonlinear,arnold2002photon}, neutron stars~\cite{watts2016colloquium}, Weyl semi-metals~\cite{lv2015experimental}, relativistic hydrodynamics~\cite{rezzolla2013relativistic}, and other materials with Dirac cone dispersion in their band structure~\cite{malko2012two}. RQM has vast applications in other fields, such as the stochastic phenomena which is realized using a time Wick's rotation. The examples are the relativistic stochastic phenomena~\cite{dunkel2009relativistic,debbasch2007relativistic,de1986stochastic,deng2009finite}, the relativistic Ornstein-Uhlenbeck process~\cite{debbasch1997relativistic,rigotti2005relativistic}, the relativistic Brownian motion~\cite{pal2020stochastic,kakushadze2017volatility}, the relativistic Fokker-Planck equation~\cite{hakim1968relativistic,kleinert2013green,serva1988relativistic}, the classical and relativistic Gross-Pitaevskii equation~\cite{kleinert2012fractional} and the relativistic diffusion~\cite{chevalier2008relativistic}. Due to its wide applications, the standard theory for relativistic (classical and quantum) statistical mechanics is nowadays well-developed~\cite{mi2011introduction}. There are different ways in order to construct a relativistic version of classical quantum mechanics that respect relativity due to their covariant formulation. The Klein-Gordon equation~\cite{greiner2000relativistic} containing second time derivative, the Dirac equations~\cite{thaller2013dirac} which is of the first order in the time derivative, and the Salpeter equation~\cite{foldy1956synthesis,kowalski2011salpeter}, which is known as the ``square root'' of the Klein-Gordon operator are the famous formulations to this end. The Salpeter equation governs scalar particles since its relativistic covariance is applied at the operator level, while the Dirac equation and the Klein-Gordon equation are capable of describing higher dimensional representations of the Lorenz group. Although working with a square root of the Klein-Gordon operator makes the Salpeter equation difficult to handle, the fact that it resolves some essential problems with the Klein-Gordon equation like the negative energy particles is a privilege, making it worth putting under intense studies~\cite{kowalski2011salpeter}. The square-root operator induces non-locality for Salpeter's equation, a fact that makes it troublesome. An analysis for the conserved current, and also second quantization formulation can be found in~\cite{lammerzahl1993pseudodifferential}. Despite its importance, there are limited works that focus on its structure and consequences. For example, the causality structure of this model, its plane wave solutions, the superposition of quantum states, entanglement structure, second quantized field theoretical description, and the corresponding particle algebra that retrieves the causality, the closed form of the green function, and the dual stochastic theory (after the Wick rotation) is missing in the literature. The aim of the present paper is to process the later gap, i.e., the closed form of the propagator of the free particles, and the dual stochastic interpretation.  

A motivation for the Salpeter stochastic description is the crossover phenomena in diffusion processes that takes place for example in the analysis of stochastic processes, such as the evolution of the probability density function of the S\&P 500 stock market index \cite{alonso2019q}. This crossover is manifested in a superdiffusive regime for small times and a Gaussian diffusive regime for large times. Attempts have been made to use quantum mechanic models to measure stock return volatility, and reported a power-law relation in the tails of price return distribution \cite{ahn2024business}. For small momentum values, the Salpeter equation is analogous to the Schr\"odinger equation (where in the dual stochastic equation, the solution is normal diffusion), while in the large momentum it corresponds to massless free particles \cite{kowalski2011salpeter}. On the other hand, the Green function of the stochastic equation is related to the Cauchy distributions \cite{baeumer2010stochastic}. Therefore, we expect a crossover between these two limits for intermediate momenta, which induces a crossover in terms of the spatiotemporal coordinates. 

This paper presents analytical solutions of the Salpeter propagator in its quantum and stochastic versions. 
The paper is organized as follows: In the next section, we introduce the Salpeter equation to be analyzed in this paper and provide the integral representation and analytical solutions of the propagator.  In Sec.\ref{SEC:RelStoch} we present the dual stochastic description, namely the B\"aumer equation, and analyze the crossover regimes of the propagator of this relativistic stochastic process. 

\section{Salpeter Hamiltonian}~\label{SEC:SalEq}

\begin{figure*}[t]
    \centering
    \includegraphics[trim=2cm 8cm 2cm 8cm, width=0.45\linewidth]{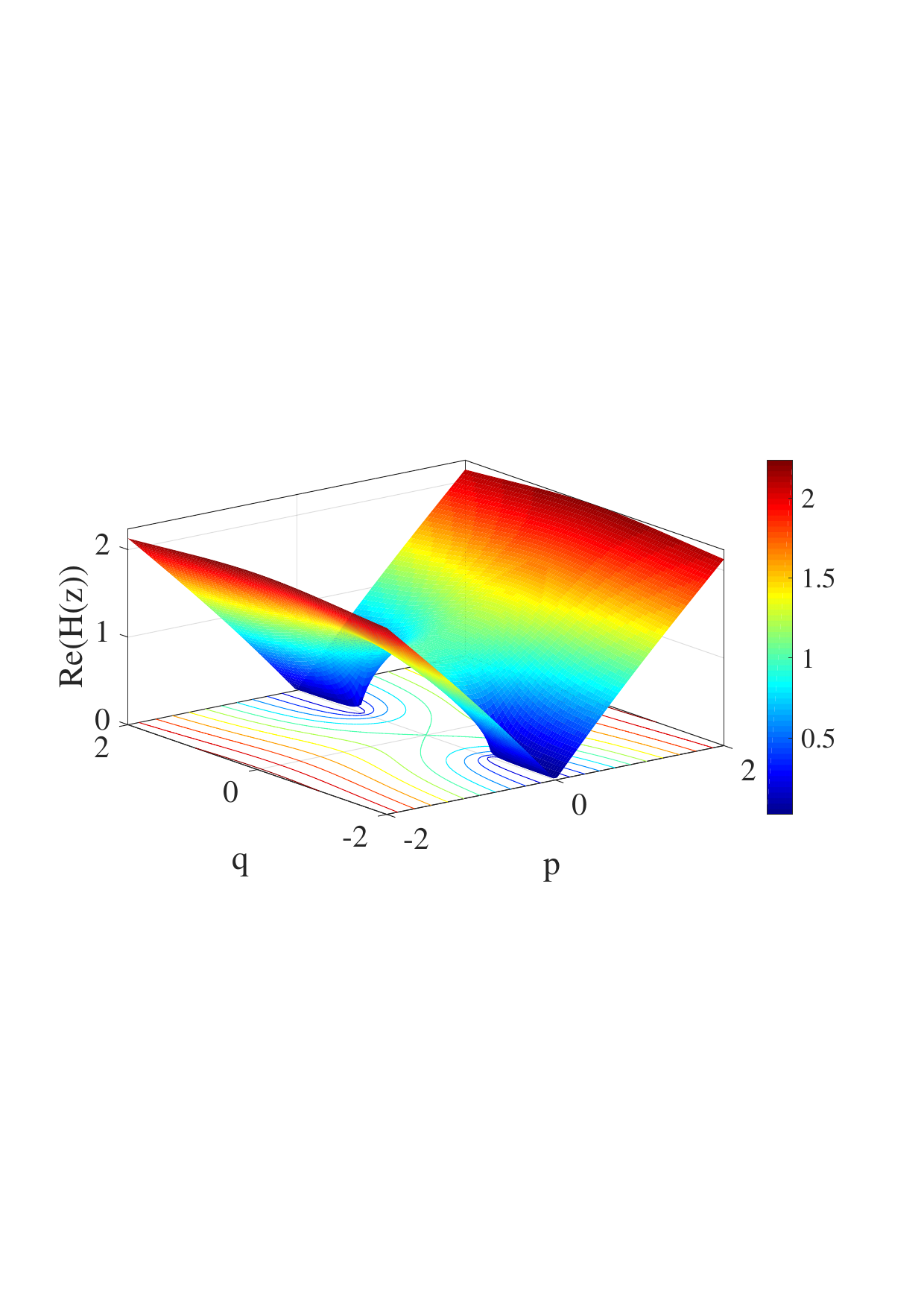}
    \includegraphics[trim=2cm 8cm 2cm 8cm, width=0.45\linewidth]{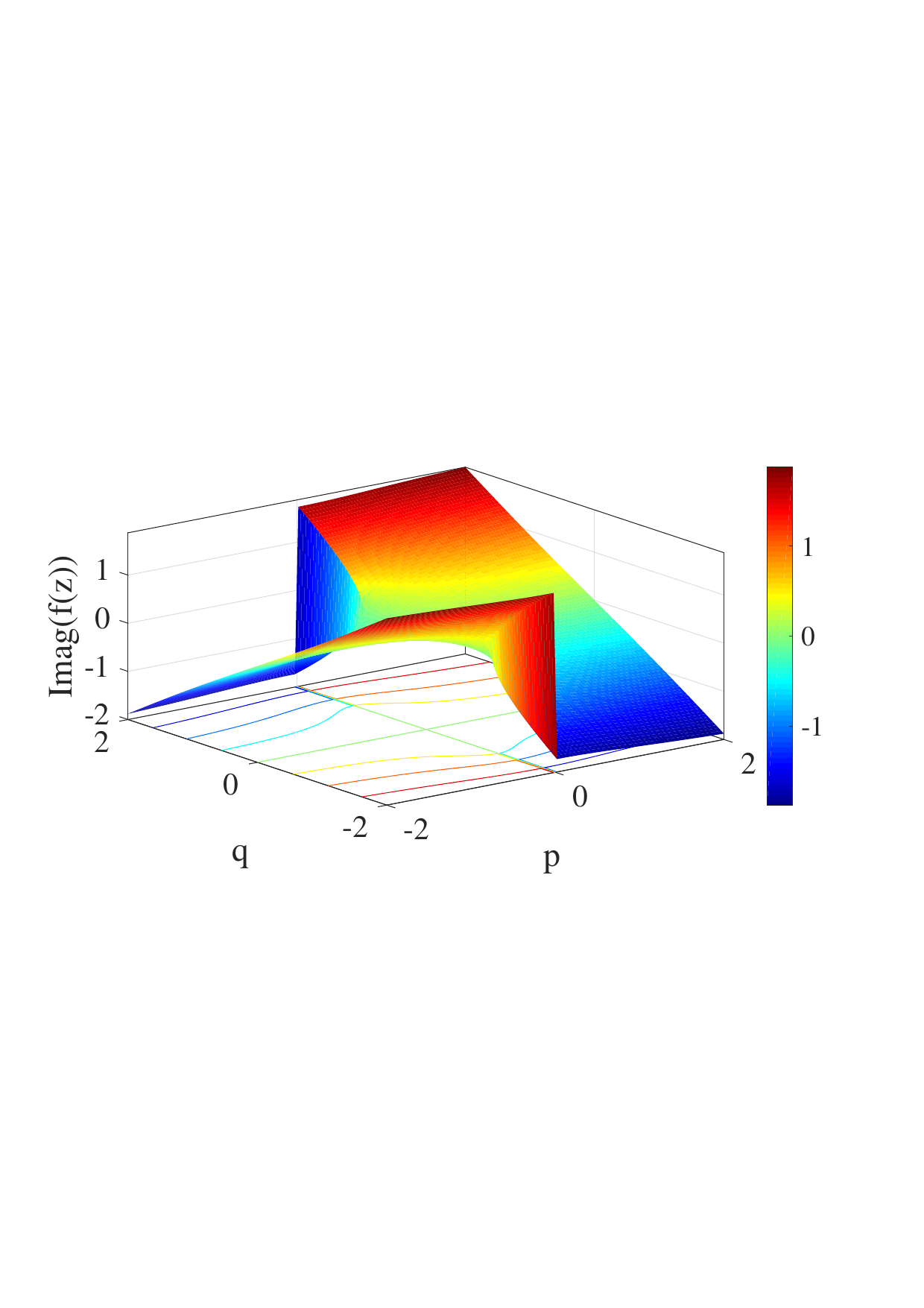}
    \caption{Real and imaginary part of the function $f(z)=\sqrt{1+z^2}$, where $z=p+iq$. The principal branch used in this calculation is obtained from the branch of the complex square root: $\sqrt{z}=\sqrt{\rho}(\cos\phi/2) + i*\sin(\phi/2))$, where  $z=\rho e^{i\phi}$ and $-\pi\le\phi\le\pi$. This branch is extended along the imaginary axis by imposing $\Im{f(iq)}>0 $ in the first quadrant, and $\Im{f(iq)}<0 $ in the fourth quadrant. This guarantee that the defined function is analytical in the first and fourth quadrant of the complex plane.}
    \label{fig:H_Complex}
\end{figure*}
Our derivation is based on the classical formulation of the RQM of a free particle, resulting to an integral representation of the Green (propagator) function, see Appendix \ref{app:GeneralFormalism}. The propagator function $G(x,t)$ is used to calculate the wave function $\psi(x,t)$  in terms of its initial value $\psi_0(x)$ as 

\begin{equation}
    \psi(x,t) =\int_{-\infty}^{\infty}{G(x-x_0,t)\psi_0(x_0)dx_0}.
    \label{eq:wavefunction_initial}
\end{equation}
The propagator depends on the Hamiltonian of the particle and is derived in Appendix \ref{app:GeneralFormalism} using textbook's quantum mechanics:
\begin{equation}
    G(x,t)=\frac{1}{2\pi}\int_{-\infty}^{\infty}{e^{-\frac{i}{\hbar} (H(p)t+ px)}dp}
    \label{eq:propagator_simple}
\end{equation}
Let us assume that the Hamiltonian is an even function of the momentum $H(-p)=H(p)$, There are two consequences from this property: First, the propagator will be an event function with respect to $x$,
i.e. $G(-x,t)=G(x,t)$. Second, the propagator can be calculated as the sum $G=G_{+}+G_{-}$ as
\begin{equation}
G_{\pm}(x,t)=\frac{1}{2\pi}\int_{0}^{\infty}{e^{-\frac{i}{\hbar} H(p)t \pm \frac{i}{\hbar} px}dp}
\label{eq:propagator_pm}
\end{equation}
Adding $G_{+}(x,t)$ and $G_{-}(x,t)$ we obtain an useful expression of the propagator $G(x,t)$ as
\begin{equation}
\begin{split}
    G(x,t) &=G_{+}(x,t)+G_{-}(x,t) \\ 
    &= \frac{1}{2\pi}\int_{0}^{\infty}{e^{-\frac{i}{\hbar} H(p)t}  (e^{\frac{i}{\hbar} px}+e^{-\frac{i}{\hbar} px})} dp \\
    &= \frac{1}{\pi}\int_{0}^{\infty}{e^{-\frac{i}{\hbar} H(p)t}  \cos(\frac{ px}{\hbar})} dp
\end{split}
\label{eq:G}
\end{equation}

Let us consider the Hamiltonian of a free, massive, and spin-less relativistic particle in one dimension, namely the Salpeter Hamiltonian 

\begin{equation}
    H_S(p) = \sqrt{m^2c^4 + c^2p^2}-mc^2
    \label{eq:Salpeter}
\end{equation}

Due to the square root, there is no standard differential equation for this Hamiltonian. In Eq.~(\ref{eq:Salpeter}) we subtract the rest energy $mc^2$ to recover the classical Schr\"odinger equation in the limit of small velocities $|p|<<mc$. This limit leads to
\begin{equation}
    H_S(p) \approx \frac{p^2}{2m}
    \label{eq:Salpeter_approx}
\end{equation}
Then the classical Schr\"odinger equation can be expressed in space coordinates as a partial differential equation
\begin{equation}
    i\partial_t\psi(x,t)=-\frac{\hbar^2}{2m}\partial_x^2\psi(x,t),
\end{equation}
A simple transformation of the Schr\"odinger equation in quantum mechanics can lead to the diffusion equation in a stochastic equation. This transformation can be done by either transforming time into imaginary $(t\to it)$ or transforming the Hamiltonian into imaginary $(H\to iH)$. Either transformation leads to the normal diffusion in the stochastic process 
\begin{equation}
    \partial_tP(x,t)=D\partial_x^2P(x,t),
\end{equation}
where $D=\hbar^2/(2m)$.  For the relativistic case, it is not possible to express the Sch\"odinger equation as a standard partial differential equation and the solution of the propagators is not well-known. Dirac was able to provide a relativistic equation by converting the wave equation into a four-dimensional function that inherits the spin of the particles. In this paper, we present an alternative by assuming that the wave function is a scalar so that it represents a relativistic quantum particle without spin. As shown in Appendix~\ref{app:GeneralFormalism}, the quantum bra-ket (bracket) formalism allows us to express the propagator as integral given by Eq.~(\ref{eq:propagator_simple}) without the need to rely on fractional differential equations. 
The analysis relies instead on performing the analytical continuation of the Salpeter Hamiltonian on the complex plane. First, the Hamiltonian is written as 
\begin{equation}
    H_S(p)=mc^2\left(f\left(\frac{p}{mc}\right)-1\right),
\end{equation}
where $f(p)=\sqrt{1+p^2}$. Then we define $f(z)$, where $z=p+iq$, as the analytical continuation of $f(p)$. Figure \ref{fig:H_Complex} shows the real and imaginary part of the principal analytical branch of the complex function $f(z)=\sqrt{1+z^2}$.  The real part of the function has a saddle point at $z=0$. The function is analytic in some regions containing the real axis $\Im(z)=0$. This is  where the real Hamiltonian function is defined. The complex function is analytical in the complex plane except for the two branch cuts $[-i\infty,-i)$ and $[i,i\infty)$ where the derivatives do not exist. We can slightly modify this function by using the function 
$f^*(p+iq)=\lim_{\epsilon\to 0}{f(p+q+i\epsilon)}$, where $\epsilon>0$. This will correct the values along the imaginary axis so that the new function becomes analytical in the first and fourth quadrants, including the imaginary axis.

For the sake of simplicity, for the rest of the paper we rescale the variables $x$, $t$, and $p$ in such a way that $\hbar = c =1$.  Thus Eq.~\ref{eq:Salpeter} becomes 
\begin{equation}
    H_S(p) = \sqrt{m^2 + p^2}-m
    \label{eq:Salpeter_a}
\end{equation}
Then substituting Eq. \ref{eq:Salpeter_a} into Eq \ref{eq:G}, the propagator we want to solve can be defined as
\begin{equation}
    G(x,t)=\frac{1}{2\pi}\int_{-\infty}^{\infty}{e^{-iH_S(p)t+ipx}dp}
    \label{eq:propagator_solve}
\end{equation}
We will also use the splitting the integral as $G=G_++G_-$ where
\begin{equation}
G_{\pm}(x,t)=\frac{1}{2\pi}\int_{0}^{\infty}{e^{- iH_S(p)t \pm i px}dp}
\label{eq:propagator_pm_a}
\end{equation}
The numerical solution to the above integrals is very unstable due to the heavy oscillatory behaviour of the integrands. We will show that more stable numerical solutions are possible by using complex variable analysis to obtain integral representation of the propagators. We will also provide analytical solutions for these integral representations.  In the next sections we present stable integral representations and analytical solutions in different cases. We will consider the analytical extension for the integrands and use complex variable analysis to handle the integrals. 

\subsection{Classical Limit: case \texorpdfstring{$m>0$}{m>0} and \texorpdfstring{$t\gg m$}{t m}}

We start with the integration of Eq.~\ref{eq:propagator_solve} by considering the case when $t$ is large. First, we perform an analytical continuation fo the integrand of Eq.~\ref{eq:propagator_solve} in the complex plane. We notice that the integrand of this equation is an analytical function in a region in the complex plane containing the real axis ($z=p+iq$ and $q=0$). We consider $t$ real and positive. When t is large, the major contribution of this integrand is when $H_S(p)$ is imaginary and minimal, which corresponds to the value $p=0$. Under these conditions, we can appeal to Debbie's method based on the {\it saddle point approximation}  presented in Section 31 of the Book of Dennery and Krzywicki \cite{dennery1996mathematics}. The method consists of replacing the Hamiltonian by its expansion up to the quadratic term around is minimal point $p=0$, i.e. $H_S(p) \approx p^2/(2m)$. Thus,
\begin{equation}
     G(x,t)\sim\frac{1}{2\pi}\int_{-\infty}^{\infty}{e^{-i\frac{p^2}{2m}t+ipx}dp} = \sqrt{\frac{m}{2\pi it}}e^{-\frac{mx^2}{2it}}
    \label{eq:propagator_classic}
\end{equation}
Here we use the following integration formula
\begin{equation}
    \int_{-\infty}^{\infty}{e^{iap-bp^2}dp}=\sqrt{\frac{\pi}{b}}e^{\frac{a^2}{4b}}\nonumber
\end{equation}
Note that this solution corresponds to the propagator of classical quantum mechanics. The evolution of a wave function can be obtained from its initial condition using Eq.~\ref{eq:convolution}. We can also recover the Green function of the diffusion equation by making the transformation $t\to it$ and $1/(2m)\to D$:
\begin{equation}
     G(x,t) \to \frac{1}{\sqrt{\pi Dt}}e^{-\frac{x^2}{Dt}}
    \label{eq:propagator_diffusion}
\end{equation}
The propagator accounts for the probability distribution function of the particle that was at $x=0$ at $t=0$. which corresponds to the classical diffusion process.

\subsection{Integral representation, Case \texorpdfstring{$m>0$}{m>0} and \texorpdfstring{$0<x<t$}{0<x<t}}

Here we derive an integral representation of the propagator whose integrand does not pose heavy oscillatory behaviour. The method is based on the Cauchy Theorem of complex variable analysis, that is used to convert the integral on the real axis into an integral over imaginary axis that removes heavy oscillations. The transformed integral are thus well-defined and easily integrable. The requirements to transform an integral over the real axis into the imaginary axis relies on two conditions: The complex continuation of the integrand is analytical in a quadrant of the complex plane; and the integrand vanishes as the complex argument is large enough in that quadrant. The details of the use of complex variable are included in two Theorems formulated and demonstrated in Appendix ~\ref{app:theorem}. Here  we will just show that more stable numerical solutions are possible by using these Theorems to obtain integral representation of the propagators. We will also provide analytical solutions for these integral representations.

First we provide an integral representation for the propagator of relativistic particles with arbitrary mass $m > 0$ inside the light cone ($0<x<t$). We replace Eq.~\ref{eq:Salpeter_a} into Eq.~\ref{eq:propagator_pm_a} to obtain
\begin{equation}
G_{\pm}(x,t)=\frac{1}{2\pi}\int_{0}^{\infty}{e^{\Phi_{\pm}(p)}dp},
\label{eq:propagator_pm_Phi}
\end{equation}
where
\begin{equation}
    \Phi_\pm(p) = -i\sqrt{m^2 + p^2}t -mt \pm ipx.
    \label{eq:Phi_SHamiltonian}
\end{equation}  
Now we consider that for $0<x<t$ and $-\pi/2<\theta<=0$ the condition 
\begin{equation}
\lim_{R\to\infty}{\Re{\Phi(R e^{i\theta})}}=\lim_{R\to\infty}{-R (-t \pm  x)\sin(\theta)}=-\infty.  
\label{eq:Phi_SHamiltonian_condition}
\end{equation}
This condition is satisfied in the fourth quadrant in the complex plane. Thus, applying Theorem 2 (See Appendix \ref{sec:Theorems}) to the analytical branch shown in Fig. \ref{fig:H_Complex} and substituting $p$ by $-iq$ we have
\begin{equation}
\begin{split}
&G_\pm(x,t)=\frac{e^{imt}}{2\pi}\int_{0}^{\infty}{e^{-i\sqrt{ m^2 + p^2}t \pm ipx}dp}\\
&=\frac{e^{imt}}{2\pi}\int_{0}^{\infty}{e^{-i\sqrt{ m^2 + (-iq)^2}t \pm i(-iq)x}d(-iq)}\\
&=\frac{e^{imt}}{2i\pi}\left(\int_{0}^{m}+\int_{m}^{\infty}\right){e^{-i\sqrt{ m^2 + (-iq)^2}t \pm qx}dq}\\
&=\frac{e^{imt}}{2i\pi}
\left(\int_{0}^{m}{e^{-i\sqrt{m^2-q^2}t \pm qx}dq}+\int_{m}^{\infty}{e^{-\sqrt{q^2-m^2}t \pm qx}dq}\right).
\end{split}
\label{eq:xlowert}
\end{equation}
The full propagator  $G(x,t)=G_+(x,t)+G_-(x,t)$ becomes
\begin{equation}
     G(x,t)=\frac{e^{imt}}{i\pi}\int_{0}^{\infty}{e^{-\sqrt{q^2-m^2}t} \cosh{(qx)}}dq
     \label{eq:Salpeter_inner}
\end{equation}
The integrand in this equation does not have singular oscillations and it easily integrated by standard numerical methods.

\subsection{Integral representation, Case \texorpdfstring{$m>0$}{m>0} and \texorpdfstring{$x>t>0$}{x>t>0}}
Now we obtain the expression for the propagator outside of hte light cone ($x>t>0$). Under these conditions $\Phi_-$ in Eq.~\ref{eq:Phi_SHamiltonian} satisfies the condition Eq.~\ref{eq:Phi_SHamiltonian_condition} in the four quadrant where $\sin{\theta}<0$. For $\Phi_+$ we note that $-t+x>0$.  For this case the integral is calculated in the first quadrant where $\sin{\theta}>0$. Therefore we apply only the loop integral for the first quadrant for $\Phi_+$ and for the fourth quadrant for $\Phi_-$ to Eq.~\ref{eq:propagator_pm_a}.
Applying Theorem 2 on $G_-$ for the first quadrant and using the change of variable $p\to iq$
\begin{equation}
  \begin{split} 
  &G_{-}(x,t)=\frac{e^{imt}}{2\pi}\int_{0}^{\infty}{e^{-i\sqrt{ m^2 + p^2}t - ipx}dp}\\
            &=\frac{e^{imt}}{2\pi}\int_{0}^{\infty}{e^{-i\sqrt{ m^2 + (-iq)^2}t - i(-iq)x}d(-iq)}\\
            &=\frac{e^{imt}}{2i\pi}\left(\int_{0}^{m}{e^{-i\sqrt{m^2-q^2}t -qx}dq}
                                  +\int_{m}^{\infty}{e^{-\sqrt{q^2-m^2}t -qx}dq}\right).
\end{split}  
\label{eq:xlargert-}
\end{equation}
And applying Theorem 1 on $G_+$ for the fourth quadrant and using the change of variable $p\to -iq$
\begin{equation}
\begin{split}
&G_{+}(x,t)=\frac{e^{imt}}{2\pi}\int_{0}^{\infty}{e^{-i\sqrt{ m^2 + p^2}t + ipx}dp}\\
&=\frac{e^{imt}}{2\pi}\int_{0}^{\infty}{e^{-i\sqrt{ m^2 + (iq)^2}t + i(iq)x}d(iq)}\\
&=\frac{-e^{imt}}{2i\pi}\left(\int_{0}^{m}{e^{-i\sqrt{m^2-q^2}t - qx}dq}
                        +\int_{m}^{\infty}{e^{\sqrt{q^2-m^2}t - qx}dq}\right). 
\end{split}
\label{eq:xlargert+}
\end{equation}
Adding both parts and using the identity $1/i = -i$, the full propagator becomes
\begin{equation}
     G(x,t)=\frac{ie^{imt}}{\pi}\int_{m}^{\infty}{\sinh{(\sqrt{q^2-m^2}t)} e^{-qx}dq}
     \label{eq:Salpeter_outer}
\end{equation}

Note that this solution corresponds to non-zero solution for the propagator outside of this light cone, suggesting faster-than-light propagation. While the relativistic quantum mechanic formulation allows for the possibility of particles having speeds higher than light, the green function decays rapidly outside the light cone. This is associated with certain challenges and inconsistencies, such as causality violations and the lack of a clear physical interpretation. Quantum field theories, by treating particles as excitations of a vacuum state, --i.~e. quantum fields-- resolve this problem, where the Green function is expressed explicitly in terms of the expectation value of quantum fields. More explicitly, one considers the commutation (anticommutation) relations for bosonic and fermionic creation and destruction operators. As a result the retarded and advanced Green functions are found to be zero (non-zero) outside (inside) the light cone, the expression of which is different for fermions and bosons~\cite{peskin2018introduction}. 

\subsection{Massless particles: Case $m=0$}
 
When the particle has no mass, the propagator for $0<x<t$ is calculated by taking $m=0$ in the last line of  Eq. ~\ref{eq:xlowert}
\begin{equation}
  G_\pm(x,t)=\frac{1}{2i\pi}\int_{0}^{\infty}{e^{-qt\pm qx}dq}=\frac{1}{2i\pi}\frac{-1}{-t\pm x}.
\end{equation}
Adding both parts, the full propagator of mass-less particles for $x<t$ becomes
\begin{equation}
    G(x,t) = \frac{1}{\pi}\frac{it}{x^2-t^2}
    \label{eq:massless-}
\end{equation}
For the case $0<t<x$ the propagator is obtained taking $m=0$ in Eq. ~\ref{eq:xlargert-} and ~\ref{eq:xlargert+}
\begin{equation}
\begin{split}
  &G_{-}(x,t)=\frac{1}{2i\pi}\int_{0}^{\infty}{e^{-qt -qx}dq}=\frac{1}{2i\pi}\frac{1}{x+t}\\
  &G_{+}(x,t)=\frac{i}{2\pi}\int_{0}^{\infty}{e^{qt - qx}dq}=\frac{i}{2\pi}\frac{1}{x-t}\\
\end{split}
\end{equation}
Adding both parts, the full propagator of mass-less particles for $x>t$ becomes
\begin{equation}
    G(x,t) = \frac{1}{\pi}\frac{it}{x^2-t^2}
    \label{eq:masslesspropagator}
\end{equation}
This result is the same that Eq. \ref{eq:massless-} for case $x<t$. We notice a singularity at the light cone $x=t$ and, again, non-zero values at $x>t>0$ indicating faster-than-light-propagation. As in the case of massive particles, a quantum field theory formulation is required to avoid the apparent violation of the causality principle in the classical interpretation of quantum mechanics.\\

\begin{figure*}[t]
    \includegraphics[scale=0.58,clip]{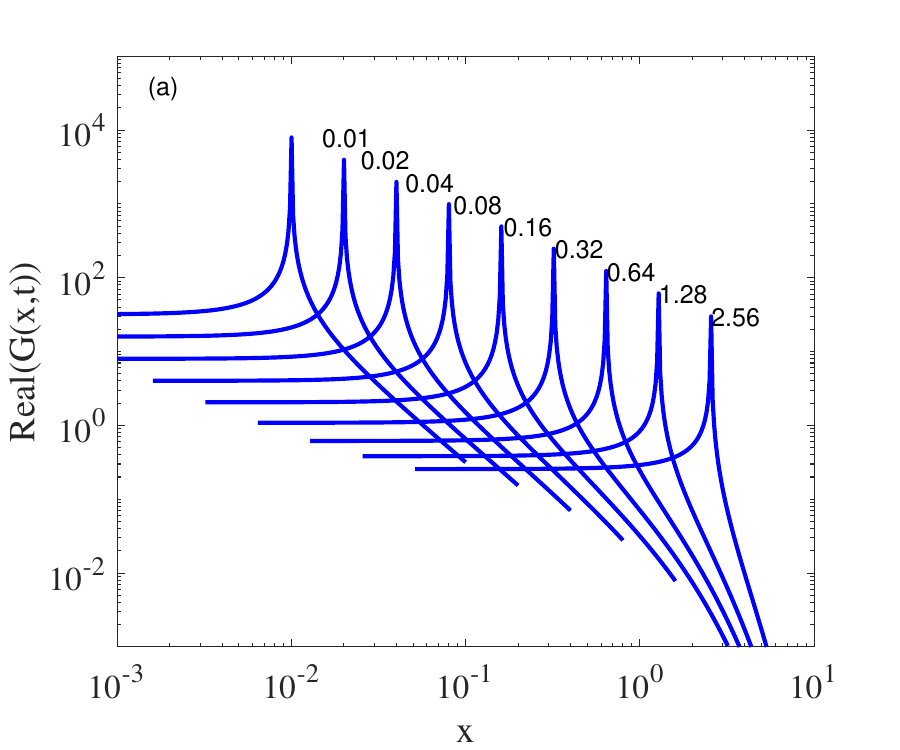}
    \includegraphics[scale=0.58,clip]{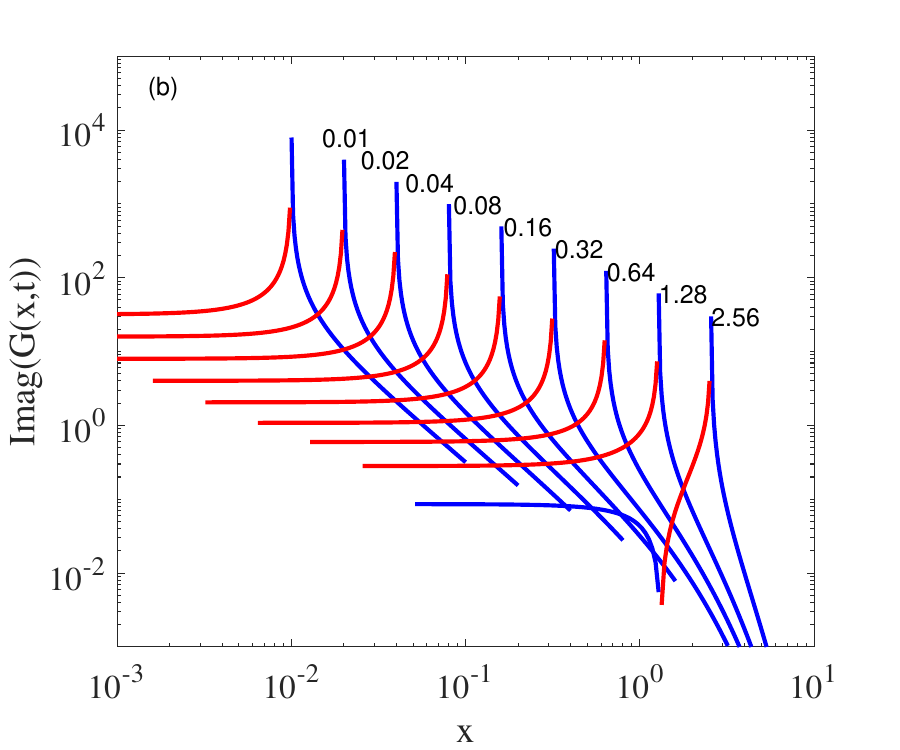}\\
    \includegraphics[scale=0.58,clip]{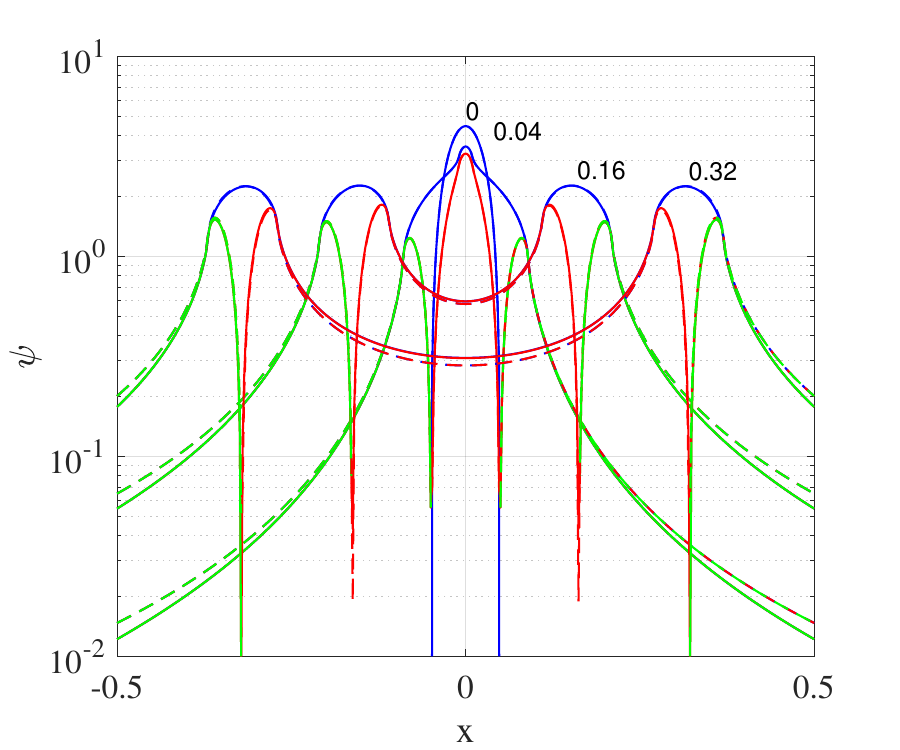}
    \includegraphics[scale=0.58,clip]{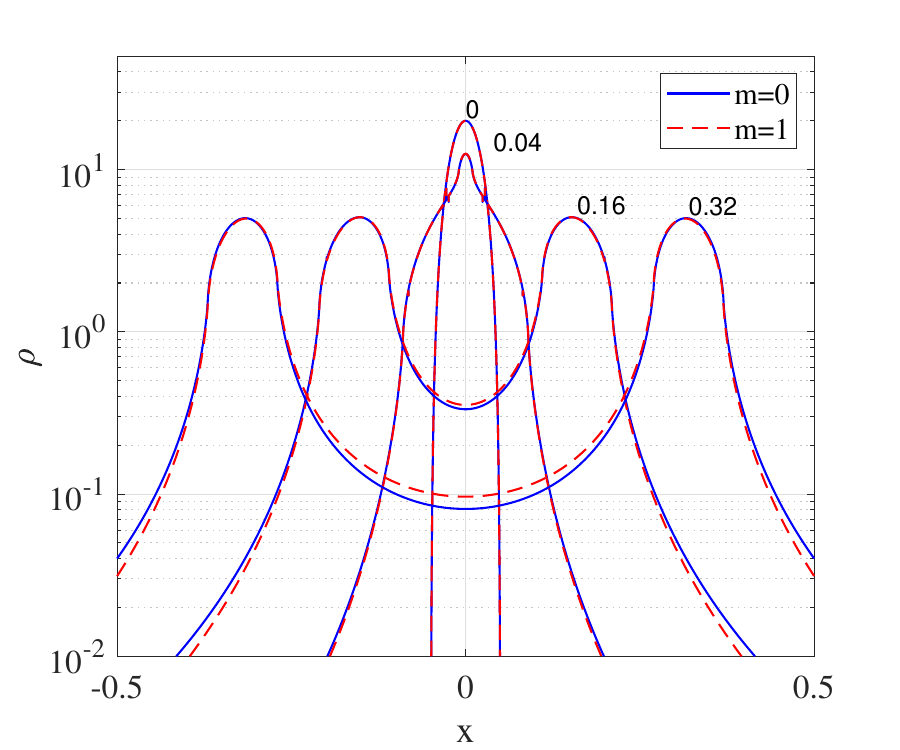}
    \caption{(a) Comparison between the numerical and analytical solution of the Salpeter propagator $G(x,t)$ concerning $x$, across various $t$ values. (a) shows the real part and (b) the positive (blue) and negative (red) values of the imaginary parts in log-log scale. The findings from both methods exhibit strong consistency and alignment. Notably, the propagator displays a singularity at the light cone, represented by $x=t$. (b) The phase change compared between the numerical and analytical solutions, depicted in terms of $\frac{1}{2}-\frac{\theta(x,t)}{\pi}$ with respect to $x$. For all $t$ values, the phase consistently converges to $-\pi/2$ in both the numerical and analytical solutions. (c) and (d) are the Wave function calculated using Eq. \ref{eq:avantiwave}. The real part is plotted in blue, and the imaginary part is plotted in green. (c) For a massless particle $m=0$. (d) For a particle $m=1$.}
    \label{fig:Salpeter_G_compare}
\end{figure*}

\subsection{Analytical Solution}
Within this subsection, we establish the analytical solution for the Salpeter propagator. Substitute the Salpeter Hamiltonian (Eq. \ref{eq:Salpeter_a}) into the propagator (Eq.~\ref{eq:G}) we have
\begin{equation}
\begin{split}
     G(x,t) &= \frac{1}{\pi}\int_{0}^{\infty}{e^{-i (\sqrt{m^2 +p^2}-m)t}  \text{cos}(px)} dp \\
     &= \frac{1}{\pi} e^{i mt} \int_{0}^{\infty}{e^{-i \sqrt{m^2+ p^2}t}  \text{cos}(px)} dp \\
\end{split}
\end{equation}
Using the identity \cite{kowalski2011salpeter}
\begin{equation}
\begin{split}
        \int_{0}^{\infty} &{dx \,\, \text{exp}(-\alpha \sqrt{x^2+\beta^2}) \text{cos}\gamma x } \\
    &= \frac{\alpha \beta}{\sqrt{\alpha^2+\gamma^2}}K_1(\beta\sqrt{\alpha^2+\gamma^2}), \text{Re}[\alpha]>0, \text{Re}[\beta]>0
\end{split}
\label{eq:integral}
\end{equation}
We can rewrite the propagator as
\begin{equation}
\begin{split}
     G(x,t) &=\frac{1}{\pi} e^{i mt} \int_{0}^{\infty}{e^{-\alpha \sqrt{\beta^2+p^2}}  \text{cos}(\gamma p)} dp \\
\end{split}
\end{equation}
where $\alpha=it$, $\beta=m$, and $\gamma=x$.
Then according to the right-hand side of Eq.~\ref{eq:integral}, we have
\begin{equation}
\begin{split}
     G(x,t) &=\frac{imt e^{i mt}}{\pi \sqrt{x^2-t^2}} K_1(m\sqrt{x^2-t^2}),\ x^2>t^2
\label{eq:sal_analy+}
\end{split}
\end{equation}
where $K_1$ is the modified Bessel function of the second kind.
For $x^2<t^2$, we can utilize the property of the modified Bessel functions, given as:
\begin{equation}
    K_\alpha (x)=\frac{\pi}{2} i^{\alpha +1} H_\alpha ^{(1)} (ix), \,\,  -\pi<\text{arg} \, x \le \frac{\pi}{2}
    \label{eq:B_to_H}
\end{equation}
where $H_\alpha ^{(1)}$ is the Hankel function of the first kind. 
Replacing $x^2-t^2$ by $-(t^2-x^2)$ in Eq.~\ref{eq:sal_analy+} we have:
\begin{equation}
\begin{split}
    G(x,t) &=\frac{imt e^{i mt}}{\pi \sqrt{-(t^2-x^2)}} \frac{\pi}{2} i^2 H_1^{(1)}(im \sqrt{-(t^2-x^2)}) \\
    &=-\frac{mt e^{i mt}}{2\sqrt{t^2-x^2}} H_1^{(1)}(-m \sqrt{t^2-x^2}),\ x^2<t^2
    \label{eq:sal_analy-}
\end{split}
\end{equation}
In summary, the analytical solution of the Salpeter propagator is
\begin{equation}
G(x,t)  =
\begin{cases}
    \frac{imt e^{imt}}{\pi\sqrt{x^2-t^2}}K_1\left(m\sqrt{x^2-t^2}\right), & x^2>t^2\\
    -\frac{mt e^{imt}}{2\sqrt{t^2-x^2}}H_1^{(1)}\left(-m\sqrt{t^2-x^2}\right), & x^2<t^2,
\end{cases}
\label{eq:sal_ana_summary}
\end{equation}
where $H_1^{(1)}$ is the Hankel function of first kind, and $K_1$ is the modified Bessel function of the second kind.

In Fig.~\ref{fig:Salpeter_G_compare}, we present the numerical and analytical solutions in terms of the real and imaginary part of the propagator and the phase at varying $t$. Both solutions are indistinguishable. We also compare the cases $m=1$ and $m=0$. The numerical solution inside the light cone $x<t$ is obtained by numerical integration of Eq~\ref{eq:Salpeter_inner}.
The numerical solution outside the cone $x>t$ is the numerical integration of Eq.~\ref{eq:Salpeter_outer}.  We note that the hyperbolic functions of these integrals are sensitive to large arguments, but this issue can be solved by expressing them in terms of exponential functions.
The analytical solution is calculated by substituting the corresponding values of $x$ and $t$ into Equation \ref{eq:sal_ana_summary}. In this context, we choose $t=2^n/100, n=0,1,2...8$, and $x \in [0,5t]$ as the calculation range.  The solutions for $m=1$ and $m=0$ are asymptotically similar when $d=|x-t|\to 0$. Here $d$ is the distance to the light cone $|x|=t$. For each time $t$, the propagators present a singularity at $d=0$. This singularity proves to be of the order $1/d$ so that it produces a sudden change in the values of the propagator and a discontinuity of the argument of this function. As we will see in the next section, this singularity should be carefully treated once we calculate the wave function of the Salpeter equation.

\subsection{Salpeter wave function}

Using the Eq.~\ref{eq:wavefunction_initial}, one finds the wave function in later times, given its form in the initial time. After the change of variable $x'=x-x_0$ this equation can be written as
\begin{equation}
    \psi(t,x) =\int_{-\infty}^{\infty}{G(x',t)\psi_0(x-x')dx'}.
\end{equation}
Strictly speaking, this integral does not exist because the propagator has two first order singularities at $x =\pm t$. To obtain a meaningful solution for the wave function, we need to circumvent the singularities by choosing an appropriate integration path in the complex plane. A meaningful solution is obtained if the integration is performed on the deformed path $C$ shown in Fig~\ref{fig:avanti}. We redefine the wave function as the path integral 
\begin{equation}
    \psi^*(t,x) =\int_C{G(z,t)\psi_0(x-z)dz}.
\end{equation}
This integral can be performed by applying the theorem of residues of the complex variable \cite{dennery1996mathematics} over the closed path shown in Fig~\ref{fig:avanti}. After calculating the residues inside the closed path, the integration becomes:
\begin{equation}
\begin{split}
     \psi^*(t,x) &= PV \int_{-\infty}^{\infty}{G(x',t)\psi_0(x-x')dx'}\\
    &+\frac{1}{2}\psi_0(x-t)+\frac{1}{2}\psi_0(x+t).
\end{split}
\label{eq:avantiwave}
\end{equation}
where $PV$ refers to the principal value of the integral \cite{dennery1996mathematics}. To see the proof of Eq.~\ref{eq:avantiwave}, and other details, refer to Appendix \ref{app:theoremofResidues}.
For the initial condition $\psi_0(x)$ we choose a wave function with compact support in the interval $[-\delta/2,\delta/2]$. The simplest case is to assume that the wave function is initially constant and real on the interval, by normalization we found that the initial wave function is constant in this interval and zero elsewhere $\psi_0(x) = \delta^{-1/2}(|x|<\delta/2)$. Using this initial condition turns out to be problematic since the wave function is non-zero at $x \pm \delta$, and it leads to four singularities in the principal value in Eq.~\ref{eq:avantiwave} at $|x-t|\pm \delta$. To remove these singularities, we need to assume that the initial wave function is zero at $x \pm \delta$. In other words, we need to guarantee that the initial wave function is continuous. Our next best option for the initial condition is an initially harmonic wave function:
\begin{equation}
   \psi_0(x) = \sqrt{\frac{2}{\delta}}\cos{\frac{\pi x}{\delta}}(|x|<\delta/2) 
   \label{eq:wave_initial}
\end{equation}
For $m>0$, We calculate the wave function based on the numerical integration of the propagator, and the results are plotted in Figure \ref{fig:Salpeter_G_compare}c. In spite of the singularity of the propagator at $x=\pm t$, the wave functions are shown to be continuous and differentiable everywhere. The propagation of the wave function is somewhat similar to the one in the wave equation, where one initial pulse is split into two equal pulses moving left and right. The main difference here is that the pulses are complex and include not only propagation but also dissipation. The probability density is calculated as $\rho = \phi^* \phi$ and shown in Figure
~\ref{fig:Salpeter_G_compare}d. The continuity equation is verified by calculating the limit of the area under 
$\rho$ for different values of the value $\epsilon$ used in the calculation of the principal value of the wave functions.  It is verified that for all values of time used in Figure \ref{fig:Salpeter_G_compare}, the limit of the integral is very close to one with an error below $0.004\%$. This is consistent with the continuity equation for the Salpeter equation, already demonstrated by Kowalski and Rembielinski~\cite{kowalski2011salpeter}

\begin{figure}[t]
    \includegraphics[scale=0.4,trim={70 140 20 40},clip]{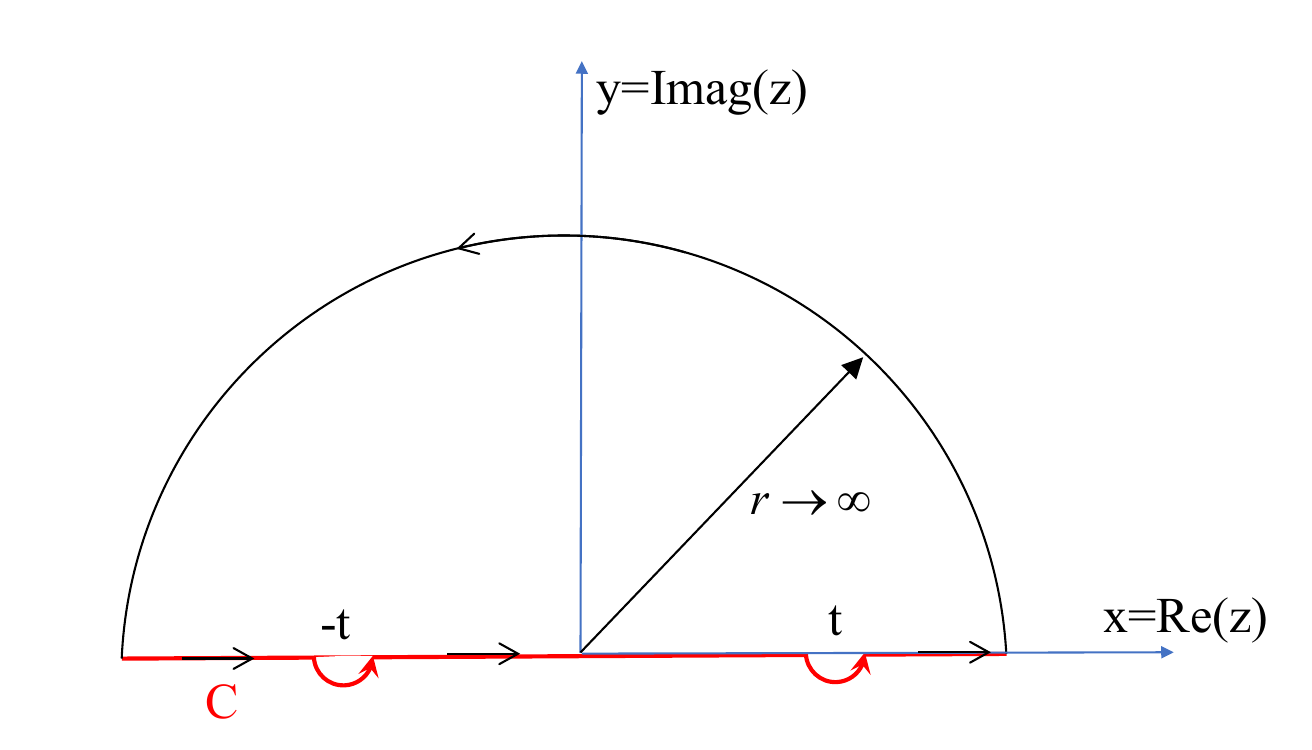}
    \caption{Contour in the complex plane used to calculate the Salpeter wave function.}
    \label{fig:avanti}
\end{figure}

\section{B\"aumer Hamiltonian}~\label{SEC:RelStoch}
Now we consider the Relativistic Diffusion Equation (RDE), which is obtained by a Wick rotation, or equivalently, multiplying the Hamiltonian by the imaginary unit $i$. The corresponding Hamiltonian has been investigated by B\"aumer et al.~\cite{baeumer2010stochastic} and will be referred here as the B\"aumer Hamiltonian 
\begin{equation}
    H_B(p)=i\sqrt{m^2 + p^2}-im.
    \label{eq:Baeumer_H}
\end{equation}
Replacing this Hamiltonian into the Schr\"odinger equation \ref{eq:schrodinger}, and using Eq.~\ref{eq:momentum} the RDE is obtained as the following fractional differential equation:
\begin{equation}
    2D[ \sqrt{1+\frac{\partial^2}{\partial x^2}} -1]P = \frac{\partial P}{\partial t},
\end{equation}
where $D\equiv \frac{m}{2}$ and $P$ is the probability density in the stochastic space, equivalent to the wave function in the quantum space. The quantum mechanics' mathematical formalism allows a simple solution of this fractional differential equation by solving the propagator that in this case is the Green function of the RDE.

The propagator in Eq.~\ref{eq:propagator} for the Hamiltonian given by Eq.~\ref{eq:Baeumer_H} becomes
\begin{equation}
    G(x,t)=\frac{1}{2\pi}\int_{-\infty}^{\infty}{e^{-\sqrt{m^2 + p^2}t+mt+ipx}dp}
    \label{eq:propagator_RDE}
\end{equation}
whose components are
\begin{equation}
    G_\pm(x,t)=\frac{1}{2\pi}\int_{0}^{\infty}{e^{-\sqrt{m^2 + p^2}t+mt\pm ipx}dp}
    \label{eq:propagator_pm_RDE}
\end{equation}
We are considering  the different cases to discuss the solution below:

\subsection{Classical Limit: Case \texorpdfstring{$m>0$}{m>0} and \texorpdfstring{$t\gg m$}{t m}}
The classical limit is the same as the asymptotic solution when $t\to\infty$. In such case the propagator \ref{eq:propagator_RDE} becomes
\begin{equation}
    G(x,t)=\frac{1}{2\pi}\int_{-\infty}^{\infty}{e^{-(p^2/2m)t+ipx}dp}
    \label{eq:propagator_G_RDE}
\end{equation}
The solution corresponds to a Gaussian distribution 
\begin{equation}
     G(x,t)=\sqrt{\frac{m}{{2\pi t}}}e^{-\frac{mx^2}{2t}}
    \label{eq:propagator_diffusion_gauss}
\end{equation}

 We can also recover the Green function of the diffusion equation by making the transformation $t\to it$ in the classical Schr\"odinger equation and taking $1/(2m)\to D$:
\begin{equation}
     G(x,t) \to \frac{1}{\sqrt{\pi Dt}}e^{-\frac{x^2}{Dt}}
\end{equation}

\subsection{Integral representation for \texorpdfstring{$m>0$}{m>0} and \texorpdfstring{$0\le x<\infty$}{0<x<inf} (Inner solution)}

For the general case, the components in Eq.\ref{eq:propagator_pm_RDE} can be summed up to obtain; 
\begin{equation}
     G(x,t)=\frac{e^{mt}}{\pi}\int_{0}^{\infty}{e^{-\sqrt{m^2 + p^2}t}\cos{(px)}dp}
     \label{eq:Baeumer_inner}
\end{equation}

We note that the integrand of has strongly oscillatory behavior as $x\to\infty$. This integral is therefore the {\it inner solution} of the propagator that can be applied for moderate values of $x$ but will not effectively converge at the tails. An outer solution is sought in the next section using complex variable analysis. 

\subsection{Integral representation for \texorpdfstring{$m>0$}{m>0} and \texorpdfstring{$x>0$}{x>0} (Outer solution)}

We write the two terms of the propagator as Eq.~\ref{eq:propagator_pm_Phi}, 
\begin{equation}
G_{\pm}(x,t)=\frac{1}{2\pi}\int_{0}^{\infty}{e^{\Phi_{\pm}(p)}dp},
\nonumber
\end{equation}
 and the asymptotic behaviour of $\Phi_\pm(z)=-iH_B(z)t\pm xz$ for $z=Re^i\theta$ as $R\to\infty$ follows
\begin{equation}
\lim_{R\to\infty}{\Re{\Phi(R e^{i\theta})}}=\lim_{R\to\infty}{-R (t\cos{\theta}\pm x\sin{\theta}})
\label{eq:condition_RDE}
\end{equation}
The condition given by Eq.~\ref{eq:condition_RDE} is satisfied in the first quadrant of the complex plane for $G_+$ and in the fourth quadrant for $G_-$. The integration of $G_+$ over the first quadrant gives
\begin{equation}
\begin{split}
    & G_+(x,t)=\frac{1}{2\pi}\int_{0}^{\infty}{e^{-\sqrt{m^2 + p^2}t +mt+ ipx}dp}\\
     &=\frac{e^{mt}}{2\pi}\int_{0}^{\infty}{e^{-\sqrt{m^2 + (iq)^2}t + i(iq)x}d(iq)}\\
      &=\frac{ie^{mt}}{2\pi}\left(\int_{0}^{m}{e^{-\sqrt{m^2  -q^2}t-qx}dq}+\int_{m}^{\infty}{e^{-i\sqrt{q^2-m^2}t-qx}dq}\right)\\\nonumber
\end{split}
\end{equation}
The integration of $G_-$ over the four quadrants results in
\begin{equation}
\begin{split}
    &G_-(x,t)=\frac{1}{2\pi}\int_{0}^{\infty}{e^{-\sqrt{m^2 + p^2}t+mt - ipx}dp}\\
    &=\frac{e^{mt}}{2\pi}\int_{0}^{\infty}{e^{-\sqrt{m^2 + (-iq)^2}t - i(-iq)x}d(-iq)}\\
    &=\frac{-ie^{mt}}{2\pi}(\int_{0}^{m}{e^{-\sqrt{m^2 -q^2} -qx}dq}
    +\int_{m}^{\infty}{e^{i\sqrt{q^2 -m^2} -qx}dq})
    \nonumber
\end{split}
\label{eq:propagatorRDE_n}
\end{equation}
Adding both parts and the full propagator becomes
\begin{equation}
     G(x,t)=\frac{-e^{mt}}{\pi}\int_{m}^{\infty}{\sin{(\sqrt{q^2-m^2}t)} e^{-qx}dq}.
     \label{eq:Baeumer_outer}
\end{equation}
Interestingly, Using the identity $sinh(iz) = isin(z)$ and replacing $t$ by $it$ we can recover the outer-cone solution of the Salpeter equation, Eq.~\ref{eq:Salpeter_outer}.

\subsection{Case \texorpdfstring{$m=0$}{m=0}}
For massless stochastic particles, the components of the propagator in Eq.~\ref{eq:propagator_pm_RDE} becomes 
\begin{equation}
    G_\pm(x,t)=\frac{1}{2\pi}\int_{0}^{\infty}{e^{-pt\pm ipx}dp}
    \label{eq:propagator_m0_RDE}
\end{equation}
Both integrals can be analytically integrated. Summing up, the solution is
\begin{equation}
    G_{m=0}(x,t)=\frac{1}{\pi}\frac{t}{t^2+x^2}
    \label{eq:propagator_m0_A}
\end{equation}
This is the Cauchy distribution with scale parameter given by $t$.
\subsection{Case \texorpdfstring{$x \gg t$}{x  t}}
The behavior of the tail of the B\"aumer distribution can be obtained by using $x>>t$ in Eq.\ref{eq:Baeumer_outer}. 
\begin{equation}
     G(x,t)=\frac{-e^{m(t-x)}}{\pi}
     \label{eq:Baeumer_tails}
\end{equation}
Interestingly, the B\"aumer distribution exhibits exponential-tailed behavior that is not the case for either the Cauchy distribution or the Gaussian distribution.

\begin{figure}[b]
    \centering
    \includegraphics[trim=1.8cm 8cm 1.8cm 8cm, width=\linewidth]{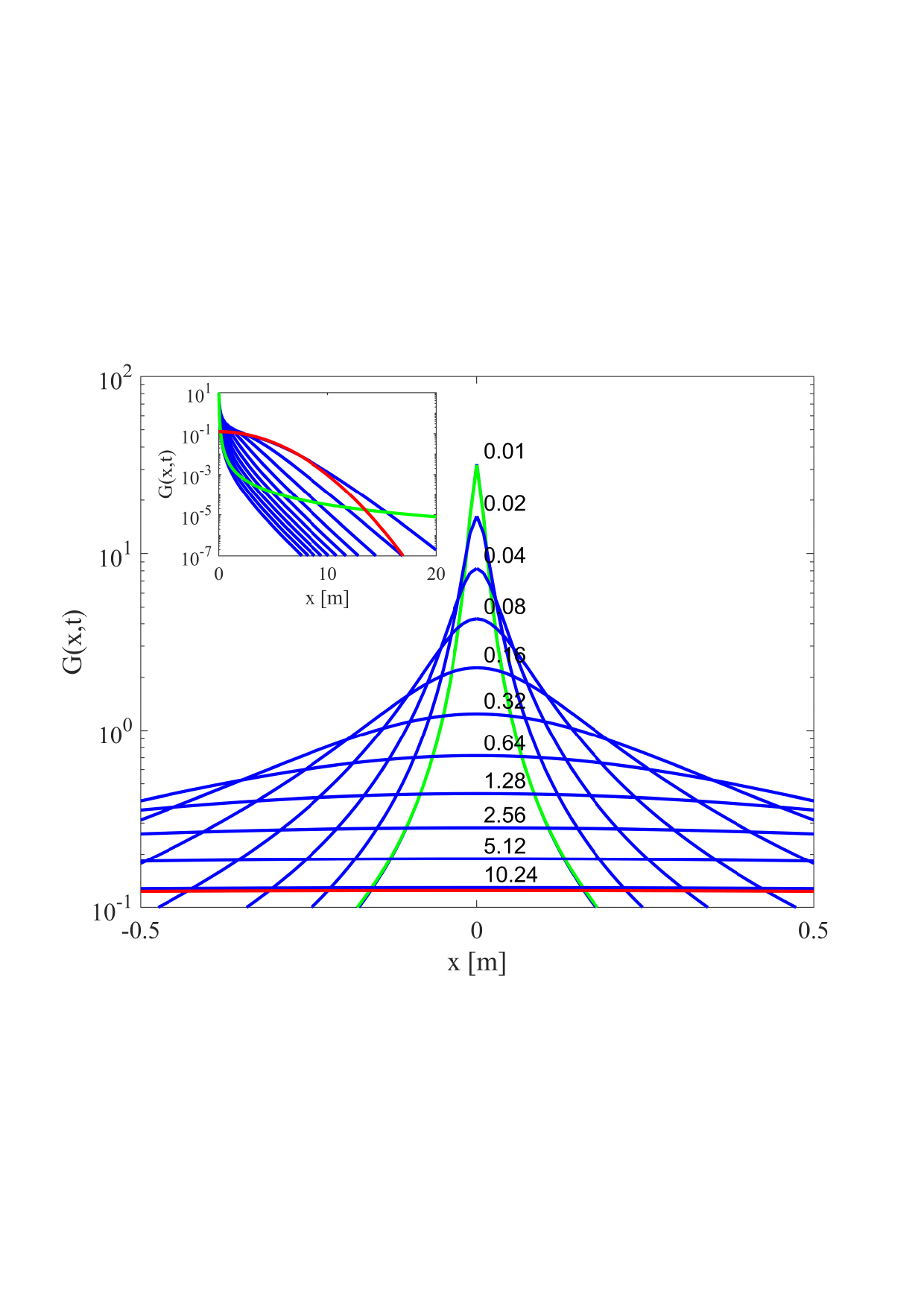}
            \caption{Solution of the B\"aumer propagation that describes the relativistic diffusion propagator. $G(x,t)$  is plotted in blueagainst $x$ for different values of $t$. The propagator is compared to the Cauchy distribution (green) and Gaussian distribution (red). The inset shows the exponential law of the tails.} 
            \label{fig:Baeumer_G}
\end{figure}

\begin{figure}[b]
    \centering
    \includegraphics[trim=1.8cm 8cm 1.8cm 6cm, width=0.9\linewidth]{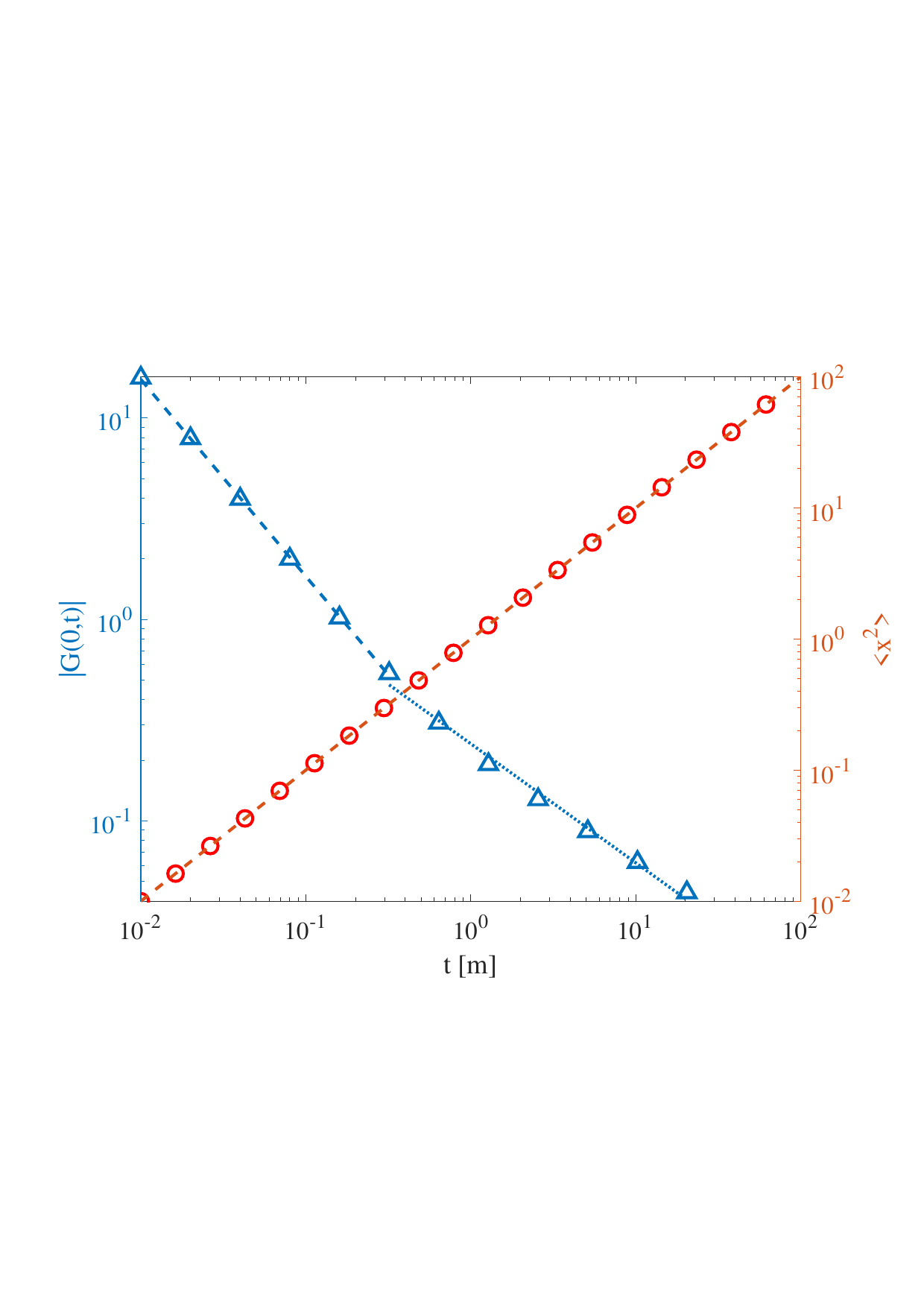}
    \caption{Evolution of the second moment of the B\"aumer propagator $<x^2>$ against $t$. The solid line corresponds to power law $y=t$}
    \label{fig:Baeumer_Diffusion}
\end{figure}

\subsection{Analytical solution}
The analytical solution of the B\"aumer propagator is obtained by substituting the imaginary Hamiltonian into the Salpeter propagator
\begin{equation}
\begin{split}
     G(x,t) &= \frac{1}{\pi}\int_{0}^{\infty}{e^{-i^2 (\sqrt{m^2 +p^2}-m)t}  \text{cos}(px)} dp \\
     &= \frac{1}{\pi} e^{-mt} \int_{0}^{\infty}{e^{\sqrt{m^2+ p^2}t}  \text{cos}(px)} dp \\
\end{split}
\end{equation}

Using the identity in Eq. \ref{eq:integral}, we have $\alpha=-t, \beta=t$ and $\gamma=x$, and the solution to the integral can be written as:
\begin{equation}
    G(x,t) = -\frac{mt e^{-mt}}{\pi \sqrt{x^2+t^2}} K_1 (m \sqrt{x^2+t^2})
\end{equation}

Note that the solution of the propagator defines a diffusion process with a transition from Cauchy to diffusion, shown in 
Fig.~\ref{fig:Baeumer_G}. The propagator (i.e., The Green function) converges asymptotically to a Cauchy distribution as $t\to 0$, a Gaussian distribution as $t\to\infty$, and an exponential function when $x\to\infty$. 

The B\"aumer propagator defines a diffusion process that is more complex than Gaussian diffusion. In many cases, the diffusivity is characterized by a power law on the evolution of the peak $G(0,t)\sim t^{-\alpha}$, or the power law of the evolution of the second moment $\langle x^2G(x,t)\rangle_x\sim t^{-\alpha}$. The diffusion process corresponds to normal diffusion when $\alpha=2$, super-diffusion when $\alpha>2$, and sub-diffusion when  $\alpha<2$.
Fig.~\ref{fig:Baeumer_Diffusion} shows the time evolution of both the peak and second moment of the propagator. The evolution of the peak shows a crossover between subdiffusion ($\alpha =1$) for small times, and normal diffusion ($\alpha=2$) for large times. A different diffusion process results from the second moment that exhibits normal diffusion at all times. This complexity of the diffusion process with crossover regions over space-time is also observed in the temporal evolution of the PDFs of the predicted price return \cite{alonso2019q}. However, the values of the exponents are different from the B\"aumuer diffusion process. In particular, the diffusion process in stock markets is not only fractional but also nonlinear. Thus, the use of quantum mechanics in stock market prediction is still severely limited due to the current limitations on the mathematical description of the nonlinearity of quantum mechanics. On the other hand, the link between stochastic equations and quantum mechanics by the Wick rotation provides interesting possibilities, like the option of solving complex non-linear stochastic problems, such as the Porous Media Fractional Equation \cite{tang2024variable,gharari2021space} and then using the Wick rotation to obtain the corresponding non-linear quantum problems.

\section{Conclusions}
We presented the numerical, analytical, and asymptotic solutions of the Salpeter propagator for relativistic quantum mechanics of a free particle with no spin, and its stochastic counterpart, the  B\"aumer propagator. The solution of the B\"aumer propagator converges asymptotically to Cauchy distribution for small times, and to Gaussian distribution with large times. The Salpeter propagation is recovered by transforming time by imaginary time. The solution of this quantum mechanical propagator shows important properties, such as non-zero values outside of the light cone, propagation with diffusion around the light cone, and singularity of the light cone. Special treatment is required to handle this singularity in the calculation of the wave function using the concept of principal value of integrals. A compact-supported initial wave function with no discontinuities at the edges allows us to remove the singularity, leading to well-posed wave functions. The conservation of probability is checked by testing that the integral of the probability density remains one for all times.

This paper has focused on the use of complex variable analysis for the derivation of solutions and the careful testing of these solutions using numerical analysis. The interpretation of these solutions in the context of relativistic quantum mechanics, quantum field theories, and relativistic stochastic dynamics is a matter of future publication.

\providecommand*{\mcitethebibliography}{\thebibliography}
\csname @ifundefined\endcsname{endmcitethebibliography}
{\let\endmcitethebibliography\endthebibliography}{}



\appendix
\section{Quantum Propagator}
\label{app:GeneralFormalism}

We can write the Schr\"odinger equation for a particle in the braket notation: 
\begin{equation}
     \hat{H}\ket{\psi}=i\hbar \partial_t\ket{\psi}; \quad \ket{\psi(t_0)} = \ket{\psi_0}.
    \label{eq:schrodinger}
\end{equation}
Here the $\hat H$ is the Hamiltonian of the particle, $\ket{\psi}$ the state of the particle at time $t$, and $\ket{\psi_0}$ the initial condition at time $t_0$.  Let us define the temporal evolution operator by
\begin{equation}
    \hat U(t_0,t)\ket{\psi(t_0)} = \ket{\psi(t)}.
    \label{eq:q-odet}
\end{equation}
 Using Eq.~\ref{eq:schrodinger}, the evolution equation of $\hat{U}$ be written  as
\begin{equation}
    i\hbar\partial_t \hat U = \hat H \hat U; \quad \hat U(t_0) = \hat 1.
    \label{eq:evo-qodet}
\end{equation}
Since the particle is in a state $\ket{\psi(t_0)}$ at $t=t_0$, the operation of temporal evolution brings the particle to a new state $\ket{\psi(t)}$. Then, the probability of finding the particle in a position $x$ is calculated by a braket operation 
$\psi(t,x)=\braket{x|\psi(t)}$. This is the wave function of the particle, which can be calculated by multiplying Eq.~\ref{eq:q-odet} by $\bra{x}$ 
\begin{equation}
    \psi(t,x) = \bra{x}\hat U(t_0,t)\ket{\psi(t_0)}.
    \label{eq:wave-t}
\end{equation}
Let us introduce the completeness operator for the observable $\hat x$
\begin{equation}
    \hat 1 = \int{dx_0 \ket{x_0}\bra{x_0}}
    \label{eq:one}
\end{equation}
Inserting this operator into Eq.~\ref{eq:wave-t} we obtain
\begin{equation}
\begin{split}
    \psi(t,x) &= \bra{x}\hat U(t_0,t)\hat 1\ket{\psi(t_0)} \\
    &=\int{dx_0\bra{x}\hat U(t_0,t) \ket{x_0}\braket{x_0|\psi(t_0)}}\\
    &=\int{dx_0G(x,t,x_0,t_0)\psi(t_0,x_0)}.
\end{split}
 \label{eq:convolution_long}
\end{equation}
Here we define the {\it propagator} by the function
\begin{equation}
    G(x,t,x_0,t_0)=\bra{x}\hat U(t_0,t) \ket{x_0}
    \label{eq:propagator}
\end{equation}
According to Eq.~\ref{eq:convolution}, it is noted that the propagation behaves similarly to the Green functions of the partial differential equation.  This function can be calculated for some specific cases. If the Hamiltonian is independent of time, the evolution operator of Eq.~\ref{eq:evo-qodet} can be easily solved to obtain
\begin{equation}
    \hat{U}(t,t_0) = e^{-\frac{i}{\hbar}\hat{H} (t-t_0)}.
    \label{eq:sol-qodet}
\end{equation}
Eq.~\ref{eq:sol-qodet} can be replaced with Eq.~\ref{eq:propagator} to obtain the explicit expression of the propagator. This requires to introduce the monochromatic wave functions that result from the eigenstates of the  momentum observable $\hat p$
\begin{equation}
    \ket{p}=\frac{1}{\sqrt{2\pi}}\int{dx_0 e^{\frac{i}{\hbar} px_0\ket{x_0}}}
\end{equation}
The wave functions of the eigenstates $\ket{x}$  of the observables $\hat{x}$ is given by the Dirac's delta function
\begin{equation}
    \psi_{x_0}(x) =\braket{x|x_0}=\delta(x-x_0) 
\end{equation}
In agreement with Eq.~\ref{eq:one}, The property of the Dirac function hold for any function $f(x)$
\begin{equation}
    \int{dx f(x) \delta(x-x_0)}=f(x_0).
\end{equation}
The  corresponding wave function of the eigenstates $\ket{p_0}$ of the observable $\hat{p}$ is given by
\begin{equation}
    \psi_{p_0}(x) =\braket{x|p_0}=  \frac{1}{\sqrt{2\pi}}\int{dx_0 e^{\frac{i}{\hbar} px_0}\braket{x|x_0}}=\frac{1}{\sqrt{2\pi}}e^{\frac{i}{\hbar} p_0 x}
\end{equation}
Let us note that $i\hbar\partial_x \psi_{p_0}(x) = p\psi_{p_0}(x)$ which indicates that
\begin{equation}
    \bra{x}\hat p \ket{p_0}=-i\hbar\partial_x \braket{x|k}
    \label{eq:momentum}
\end{equation}
The completeness relation for the $\hat p$ observable also holds 
\begin{equation}
    \hat 1 = \int{dp_0 \ket{p_0}\bra{p_0}}
    \label{eq:one_p}
\end{equation}
Let us assume a Hamiltonian of the free particle
\begin{equation}
    \hat H = \hat H(p)
    \label{alpha H}
\end{equation}
Using Eq.~\ref{eq:one_p} in the propagator (Eq.~\ref{eq:propagator}) gives
\begin{equation}
\begin{split}
    &G(x,t,x_0,t_0) =\bra{x} e^{-\frac{i}{\hbar}\hat{H} (t-t_0)} \ket{x_0}\\
&=\int{dp \bra{x} e^{-\frac{i}{\hbar} \hat H{(p)} (t-t_0)} \ket{p}\braket{p|x_0}}\\
&=\int{dp e^{-\frac{i}{\hbar}H{(p)} (t-t_0)} \braket{x|p}\braket{p|x_0}}\\
&=\frac{1}{2\pi}\int{dp \exp{-\frac{i}{\hbar} H{(p)} (t-t_0)+\frac{i}{\hbar} p(x-x_0)}}
    \label{eq:propagator_alpha}
\end{split}
\end{equation}
Hence the propagator satisfies 
\begin{equation}
    G(x,t,x_0,t_0)=G(x-x_0,t-t_0).\nonumber
\end{equation}
 Using this property, and assuming that $t_0=0$, we can simplify the calculation of the wave function by Eq.~\ref{eq:convolution_long} to
\begin{equation}
    \psi(t,x) =\int_{-\infty}^{\infty}{G(x-x_0,t)\psi_0(x_0)dx_0}.
    \label{eq:convolution}
\end{equation}
Where $\psi_0(x)$ is the initial condition and 
\begin{equation}
    G(x,t)=\frac{1}{2\pi}\int_{-\infty}^{\infty}{e^{-\frac{i}{\hbar} H(p)t+i\hbar px}dp}
    \label{eq:propagator_simple_app}
\end{equation}

\section{Scaling Relations}~\label{app:scaling}
In this appendix, we focus on the scaling properties of the Salpeter equation, which helps us to recognize the general features of the model. Note that the full functionality of the green function is $G(x,t;x',t';m)$, but without loss of generality we set $x'=t'\equiv 0$, and $G(x,t,m)\equiv G(x,t;0,0;m)$, keeping in mind that eventually we should replace $x\to x-x'$ and $t\to t-t'$. We have the following integral expression
\begin{equation}
   G(x,t,m)\equiv \frac{1}{2\pi}e^{imt}\int_{0}^{\infty} e^{-i\sqrt{m^2+p^2}t}\cos{(px)}\text{d}p.
\end{equation}
One can easily show by inspection that for any positive scaling parameter $\lambda>0$ we have
\begin{equation}
G(\lambda x, \lambda t, m)=\lambda^{-1}G(x,t,\lambda m),
\label{Eq:scaling}
\end{equation}
so that under the transformation 
\begin{equation}
x\rightarrow \lambda x\ , \ t\rightarrow \lambda t \ , \ m\rightarrow \lambda^{-1} m
\label{Eq:scaling_trans}
\end{equation}
we have $G\rightarrow \lambda^{-1}G$. Therefore, we expect the following identity:
\begin{equation}
G(x,t)=m\tilde{G}\left[m(a x + bt)\right],
\end{equation}
where $\tilde{G}$ is a general function, that should be determined using other methods, and $a$ and $b$ are some constant complex numbers. The argument of $\tilde{G}$ guarantees the invariance of $\tilde{G}$ under scaling transformation. The limit $m\to 0$ helps to identify this function. To this end, we add a vanishingly small quantity to time, i.e. $t \to t-i0^+$ (which guarantees that the integral is well-defined in the UV limit $p\to \infty$), so that
\begin{equation}
G(x,t,m=0)=\frac{1}{2\pi}\left[\frac{i}{x-t+i0^+}-\frac{i}{x+t-i0^+}\right].
\end{equation}
Using the identity
\begin{equation}
\frac{1}{X+i 0^{\pm}}=\text{P.V.}\frac{1}{X}\mp i\pi \delta(X),
\end{equation}
where $\text{P.V.}$ stands for principal value, we obtain
\begin{equation}
G(x,t,m=0)=\frac{it}{\pi(x^2-t^2)}+\frac{1}{2}\delta(x-t)+\frac{1}{2}\delta(x+t),
\label{Eq:Imzero}
\end{equation}
where $\delta(x)$ is a Dirac delta function, and the first term on the right is $\text{P.V}$. One may compare this result with the orthogonality identity for plane waves
\begin{equation}
\frac{1}{2\pi}\int_{-\infty}^{\infty}e^{ip(x\pm t)}\text{d}p=\frac{1}{\pi}\lim_{p\to\infty}\frac{\sin p(x\pm t)}{x\pm t}=\delta(x\pm t),
\end{equation}
from which we realize that the first term in the r.h.s. of Eq.~\ref{Eq:Imzero} is due to the presence of \textit{absolute value} in the dispersion relation $E=\left|p\right|$ in the $m\to 0$ limit.
From this limit we realize that $a=1$ and $b=\pm 1$. We additionally find that for very small $m$ we have
\begin{equation}
\tilde{G}(x,t,\text{small}\ m)=\frac{1}{2\pi}\left[\frac{i}{m(x-t)+i0^+}-\frac{i}{m(x+t)-i0^+}\right].
\label{Eq:small}
\end{equation}
The other quantity of interest is $\left\langle x^2\right\rangle $. From the symmetry Eq.~\ref{Eq:scaling} one finds that
\begin{equation}
\left\langle x^2\right\rangle = m^{-2}\tilde{f}(mt)=t^2\tilde{g}(mt),
\end{equation}
where $\tilde{f}(y)=y^2\tilde{g}(y)$ are general smooth functions. For the massless free particle limit, we know
\begin{equation}
\left\langle x^2\right\rangle_{\text{massless}}=t^2,
\end{equation}
from which we see 
\begin{equation}
\left.\tilde{g}(y)\right|_{m\to 0}\to 1.
\end{equation}
\section{Integral Transformation Theorem}~\label{app:theorem}

\label{sec:Theorems}
Analytical integration of improper functions is available only in simple cases. If the integrand poses strong oscillatory behavior, numerical integration is also inconvenient and often lacks accuracy. Using the Cauchy theorem of complex variable analysis, it is possible to transform integrals with oscillatory behavior to integrands with smooth exponential decreases that are easier to integrate.  

\begin{theorem}
If $f(z)=e^{\Phi(z)}$ is is a function that is analytical in the first quarter of the complex plane:
\begin{equation}
    D_1=\{z=\rho e^{i\theta}=p+iq;0\le\theta\le\pi/2\}\nonumber 
\end{equation}
and $\Phi(z)$, with $z=\rho e^{i\theta}=p+iq$\, satisfies 
\begin{equation}
    \lim_{\rho\to\infty}{\Re{\Phi(\rho e^{i\theta})}}=-\infty 
    \quad z \in D_1
    \label{eq:condition_1}
\end{equation}
its integral over the semi-real axis $\mathbb{R}\cap D_n$ can be transformed into an integral over the semi-imaginary axis $\mathbb{I}\cap D_1$ as
\begin{equation}
\int_0^\infty{ e^{\Phi(p)} dp} 
=\int_0^\infty{ e^{\Phi(iq)} idq }.
\label{eq:complex_cond}
\end{equation} 

\end{theorem}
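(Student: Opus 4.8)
The plan is to prove the identity by a standard contour-rotation argument: apply Cauchy's integral theorem to a closed quarter-circle contour lying in $D_1$, and then argue that the contribution of the circular arc disappears as its radius grows, so that the two remaining straight segments — one along the real axis and one along the imaginary axis — are forced to be equal. Concretely, for a radius $R>0$ I would introduce the positively oriented boundary $C=C_1\cup C_2\cup C_3$, where $C_1$ is the real segment $z=p$, $p\in[0,R]$, $C_2$ is the arc $z=Re^{i\theta}$, $\theta\in[0,\pi/2]$, and $C_3$ is the imaginary segment traversed from $iR$ down to $0$. Since $f(z)=e^{\Phi(z)}$ is analytic on $D_1$ by hypothesis, Cauchy's theorem \cite{dennery1996mathematics} gives $\oint_C f(z)\,dz=0$. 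Parametrizing $C_3$ by $z=iq$ (so $dz=i\,dq$, with $q$ running from $R$ to $0$) and collecting the three pieces yields
\[
\int_0^R e^{\Phi(p)}\,dp+\int_0^{\pi/2} e^{\Phi(Re^{i\theta})}\,iRe^{i\theta}\,d\theta-\int_0^R e^{\Phi(iq)}\,i\,dq=0 .
\]
Once the middle (arc) term is shown to vanish as $R\to\infty$, letting $R\to\infty$ and rearranging reproduces exactly Eq.~(\ref{eq:complex_cond}); the convergence of the two surviving integrals is itself secured by the decay of $\Re\Phi$ asserted in Eq.~(\ref{eq:condition_1}) along the real and imaginary rays.

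The step I expect to be the genuine obstacle — and not a routine estimate — is controlling the arc integral. Bounding by the modulus gives
\[
\left|\int_{C_2} e^{\Phi(z)}\,dz\right|\le R\int_0^{\pi/2} e^{\Re\Phi(Re^{i\theta})}\,d\theta ,
\]
and here the competition between the arc-length factor $R$ and the exponential is delicate. The pointwise statement $\Re\Phi\to-\infty$ in Eq.~(\ref{eq:condition_1}) is \emph{not} sufficient on its own, because in the relevant applications $\Re\Phi$ decays only linearly, schematically $\Re\Phi(Re^{i\theta})\sim -R\,\kappa\sin\theta$ with $\kappa>0$ (cf.\ the structure of Eq.~(\ref{eq:Phi_SHamiltonian_condition})). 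A Jordan-type estimate, using the concavity bound $\sin\theta\ge 2\theta/\pi$ on $[0,\pi/2]$, gives $R\int_0^{\pi/2}e^{-R\kappa\sin\theta}\,d\theta<\pi/(2\kappa)$, which is merely $O(1)$: the factor $R$ exactly cancels the $1/R$ gain from Jordan's lemma. The decay near $\theta=\pi/2$ is therefore controlled, but the neighborhood of $\theta=0$, where $\sin\theta$ is small, requires additional \emph{transverse} decay to beat the prefactor.

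I would supply that missing ingredient in one of two ways, according to the nature of $\Phi$. In the non-oscillatory (B\"aumer) case the real part decays genuinely in every direction of $D_1$, so $R\,e^{\Re\Phi}\to0$ pointwise for each fixed $\theta\in[0,\pi/2]$ and, with the $\sin\theta$-bound furnishing an integrable dominating function, dominated convergence forces the arc to vanish. In the oscillatory (Salpeter) case, where $\Re\Phi$ is only bounded along the real ray, I would first attach the convergence-generating prescription $t\to t-i0^+$ (the same regularization used in Appendix \ref{app:scaling}); this contributes an extra factor $e^{-\epsilon R\cos\theta}$, so that $R\,e^{\Re\Phi}\lesssim R\,e^{-\epsilon R\cos\theta-R\kappa\sin\theta}\to0$ uniformly enough for the arc to drop out, after which $\epsilon\to0^+$ is taken. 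With the arc discarded, the displayed Cauchy identity collapses to $\int_0^\infty e^{\Phi(p)}\,dp=\int_0^\infty e^{\Phi(iq)}\,i\,dq$, which is Eq.~(\ref{eq:complex_cond}). I would close by noting that the companion result (Theorem~2) for the fourth quadrant follows verbatim upon reflecting the contour about the real axis and replacing $iq$ by $-iq$.
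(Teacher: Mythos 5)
Your proposal follows the same architecture as the paper's own proof: close a quarter-circle contour in $D_1$, invoke Cauchy's theorem, and discard the arc. The substantive difference is the treatment of the arc, and there you are more careful than the paper. The paper writes the arc contribution as $\lim_{\rho\to\infty}\int_0^{\pi/2}e^{\Phi(\rho e^{i\theta})}\,d\theta$ --- with the Jacobian factor $i\rho e^{i\theta}$ silently dropped --- and then simply asserts that this term ``is expected to vanish'' under the hypothesis $\Re\Phi\to-\infty$. You correctly observe that once the $O(\rho)$ arc-length factor is restored, the stated hypothesis is not by itself sufficient: when $\Re\Phi(\rho e^{i\theta})\sim-\rho\kappa\sin\theta$ (the structure arising in the Salpeter application), the Jordan-type estimate yields only an $O(1)$ bound, and in that application the hypothesis even fails at the endpoint $\theta=0$, where $\Re\Phi$ stays bounded rather than diverging. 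Your two remedies --- uniform transverse decay plus domination in the B\"aumer case, where $t\cos\theta+x\sin\theta$ is bounded below on the closed quadrant, and the $t\to t-i0^+$ regularization followed by $\epsilon\to0^+$ in the oscillatory Salpeter case --- supply exactly the ingredient the paper's proof omits. So: same approach, but your version closes a genuine gap in the paper's argument; the price is that the vanishing of the arc is no longer a consequence of the theorem's hypothesis alone, but of the finer structure of the specific functions $\Phi$ to which the theorem is actually applied, which suggests the theorem's hypothesis itself ought to be strengthened (e.g.\ to a uniform bound of the form $\Re\Phi(\rho e^{i\theta})\le C-\rho\,h(\theta)$ with $\int_0^{\pi/2}\sup_{\rho}\rho\,e^{-\rho h(\theta)}\,d\theta<\infty$, or to decay uniform in $\theta$).
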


\begin{proof}
Since $f(z)$ is analytical in the first quadrant, its integral over any of the circular sectors on the first quadrant is zero
\begin{equation}
    \oint_{D_1}{ f(z)dz}=\oint_{D_1}{ e^{\Phi(z)} dz}=0
\end{equation}
Applying this property to a circular sector of infinite radius,
\begin{equation}
\int_{0}^\infty{ e^{\Phi(p)} dp}
+\lim_{\rho\to\infty}\int_{0}^{\frac{\pi}{2}}{ e^{\Phi(\rho e^{i\theta} )} d\theta}
+\int_{\infty}^{0}{ e^{\Phi(iq)} idq }=0\nonumber
\end{equation}
Thus,
\begin{equation}
\int_{0}^\infty{ e^{\Phi(p)} dp}
=\int_{0}^{\infty}{ e^{\Phi(iq)} idq }
-\lim_{\rho\to\infty}\int_{0}^{\frac{\pi}{2}}{ e^{\Phi(\rho e^{i\theta} )} d\theta}\nonumber
\end{equation}
The second term is expected to vanish under the condition Eq.~\ref{eq:complex_cond}
\end{proof}

Now we provide a similar theorem for the four quadrants of the complex plane.

\begin{theorem}
If $f(z)=e^{\Phi(z)}$ is is a function that is analytical in the four quarter of the complex plane:
\begin{equation}
    D_4=\{z=\rho e^{i\theta}=p+iq;-\pi/2\le\theta\le 0\}\nonumber 
\end{equation}
and $\Phi(z)$, with $z=\rho e^{i\theta}=p+iq$\, satisfies 
\begin{equation}
    \lim_{\rho\to\infty}{\Re{\Phi(\rho e^{i\theta})}}=-\infty 
    \quad z \in D_4
    \label{eq:condition_4}
\end{equation}
the integral over the semi-real axis $\mathbb{R}\cap D_4$ can be transformed into an integral over the semi-imaginary axis $\mathbb{I}\cap D_4$ as
\begin{equation}
\int_0^\infty{ e^{\Phi(p)} dp} =\int_0^\infty{ e^{\Phi(-iq)} (-i)dq }
\label{eq:complex_int2}
\end{equation} 
\end{theorem}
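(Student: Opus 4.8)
The plan is to mirror the proof of Theorem~1, reflecting the circular sector from the first quadrant into the fourth. First I would fix $R>0$ and apply Cauchy's theorem to $f(z)=e^{\Phi(z)}$ on the closed sector $S_R=\{\rho e^{i\theta}:0\le\rho\le R,\ -\pi/2\le\theta\le 0\}$, where $f$ is analytic by hypothesis; hence $\oint_{\partial S_R}e^{\Phi(z)}\,dz=0$. I would then traverse the boundary as the closed loop $0\to R\to -iR\to 0$, splitting it into three pieces: the real segment $[0,R]$, contributing $\int_0^R e^{\Phi(p)}\,dp$; the quarter-circle arc $\Gamma_R:\,z=Re^{i\theta}$ with $\theta$ decreasing from $0$ to $-\pi/2$; and the return along the negative imaginary axis, parametrised by $z=-iq$ with $q$ running from $R$ to $0$, where $dz=-i\,dq$ and the contribution is $\int_R^0 e^{\Phi(-iq)}(-i)\,dq$.

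Setting the sum of the three pieces to zero and rearranging gives the finite-$R$ identity
\[
\int_0^R e^{\Phi(p)}\,dp=\int_0^R e^{\Phi(-iq)}(-i)\,dq-\int_{\Gamma_R}e^{\Phi(z)}\,dz .
\]
The claimed identity~\eqref{eq:complex_int2} then follows by letting $R\to\infty$, \emph{provided} the arc integral over $\Gamma_R$ vanishes in the limit. This is precisely the step where the hypothesis $\lim_{\rho\to\infty}\Re{\Phi(\rho e^{i\theta})}=-\infty$ on $D_4$ is used: writing $\int_{\Gamma_R}e^{\Phi}\,dz=\int_{0}^{-\pi/2}e^{\Phi(Re^{i\theta})}\,iRe^{i\theta}\,d\theta$ and bounding the modulus by $\int_{-\pi/2}^{0}R\,e^{\Re{\Phi(Re^{i\theta})}}\,d\theta$, each integrand tends pointwise to $0$ because $e^{\Re{\Phi(Re^{i\theta})}}$ decays faster than the prefactor $R$ grows.

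The hard part, exactly as in Theorem~1, is turning this pointwise decay into genuine vanishing of the arc integral, since the $R$ prefactor diverges and the decay need not be uniform near the endpoints $\theta=0,-\pi/2$. The rigorous route is a Jordan-lemma-type estimate: split $\Gamma_R$ into thin angular wedges adjacent to the two axes, where the arc length is $O(\delta R)$ and the integrand is controlled crudely, and a bulk region where $\Re{\Phi(Re^{i\theta})}\le -cR$ for some $c>0$ furnishes exponential domination; sending $R\to\infty$ first and $\delta\to0$ afterwards then kills the whole arc. In the concrete applications of this paper one can bypass the abstract estimate by inserting the explicit phase $\Phi_\pm(z)=-i\sqrt{m^2+z^2}\,t\pm ixz$ and checking the bound directly. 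Since the statement is pitched at the same level as its companion Theorem~1, I would establish the contour identity in full and then assert vanishing of the arc under the stated hypothesis, noting that its verification reduces to this standard estimate.
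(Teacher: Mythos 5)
Your proposal follows essentially the same route as the paper's own proof: apply Cauchy's theorem to a quarter-circle sector in the fourth quadrant, identify the three boundary pieces (real segment, arc, imaginary segment), and let the arc contribution vanish under the decay hypothesis \eqref{eq:condition_4}. In fact your version is slightly more careful than the paper's --- you work at finite radius $R$ before passing to the limit, keep the Jacobian $iRe^{i\theta}$ in the arc integral, and explicitly flag the non-uniformity of the decay near $\theta=0,-\pi/2$ that a Jordan-lemma-type wedge estimate must handle, whereas the paper simply states that the arc term ``is expected to vanish'' under the stated condition.
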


\begin{proof}
The theorem will be proof for the fourth quadrant. Since $f(z)$ is analytical in the four quadrants, its integral over any of the circular sectors on the first quadrant is zero
\begin{equation}
    \oint_{D_4}{ f(z)dz}=\oint_{D_4}{ e^{\Phi(z)} dz}=0\nonumber
\end{equation}
Applying this property to a circular sector of infinite radius,
\begin{equation}
\int_{0}^\infty{ e^{\Phi(p)} dp}
+\lim_{\rho\to\infty}\int_{0}^{-\frac{\pi}{2}}{ e^{\Phi(\rho e^{i\theta} )} d\theta}
+\int_{-\infty}^{0}{ e^{\Phi(iq)} idq }=0\nonumber
\end{equation}
Thus, replacing $q$ by $-q$ in the last integral,
\begin{equation}
\int_{0}^\infty{ e^{\Phi(p)} dp}
=\int_{0}^{\infty}{ e^{\Phi(-iq)} (-idq) }
-\lim_{\rho\to\infty}\int_{0}^{-\frac{\pi}{2}}{ e^{\Phi(\rho e^{i\theta} )} d\theta}\nonumber.
\end{equation}
The second term is expected to vanish under the condition \ref{eq:condition_4}
\end{proof}

\section{Theorem of Residues}
\label{app:theoremofResidues}

The main focus is to calculate the following integral 
\begin{equation}
    \psi^*(t,x) =\int_C{G(z,t)\psi_0(x-z)dz}.
\end{equation}
where $C$ is the contour in the complex plane shown in Fig.~\ref{fig:avanti}.
The function $G(z,t)$ is 
the analytical continuation of
the propagator (Eq.~\ref{eq:propagator_solve}) in the complex plane: 
\begin{equation}
    G(z,t)=\frac{1}{2\pi}\int_{-\infty}^{\infty}{e^{-iH_S(p)t+ipz}dp}
    \label{eq:propagator_solve_complex}
\end{equation}
and the function $\psi_0(z)$ is 
the initial wave function (Eq.~\ref{eq:wave_initial})
\begin{equation}
   \psi_0(z) = \sqrt{\frac{2}{\delta}}\cos{\frac{\pi z}{\delta}}(|z|<\delta/2). 
   \label{eq:wave_initial_complex}
\end{equation}
In this appendix, we provide a proof of Eq.~\ref{eq:avantiwave}.
The integral~\ref{eq:avantiwave} diverges because of the non-integrable poles on the
integration contour. Since these poles are of the first order, the Cauchy principal value~$(PV)$ of the integral exists and corresponds to avoiding the poles with semi-circular arcs. 
To prove, we use the Cauchy residue theorem which states that if the integrand is analytic on a simply connected domain except for a finite number of isolated singularities, then the integral of that around a closed curve is equal to $2\pi i$ times the sum of the residues of the integrand at the singular points inside the closed curve.
First, we consider the case $m=0$ and
show that the~~$PV$ is: 
\begin{equation}
PV  \int_{-\infty}^{\infty}{G(y,t)\psi_0(x-y)dy}= \frac{1}{2}\left(\psi_0(x-t)+\psi_0(x+t) \right).
\label{eq:Residues_1}
\end{equation}
The contour in Fig.~\ref{fig:avanti} is chosen so that there are two poles inside it and so that the little circles around each of
the poles are so small. The contour is
\begin{equation}
C=C^{R}+C^{\epsilon}_{1}+C^{\epsilon}_{2}+C^{\infty}_{-\infty},
\label{eq:Residues_2}
\end{equation} 
where $C^{\infty}_{-\infty}$ is the integral in Eq.~\ref{eq:Residues_1}, $C^{\epsilon}_{i},\,i=1,2,$ are the contributions due to integrating over the poles at $\pm t$, and $C^{R}$
 is the (vanishing) contribution due to the large arc.
 \\
 To integrate around the pole $"-t "$, let us parameterize $z=-t+\epsilon e^{i\theta}$, then $dz=i\epsilon e^{i\theta}d\theta$. For the case $m=0,$ by considering Eq.~\ref{eq:masslesspropagator}, integrating
around the associated semi-circle yields
\begin{flalign}\nonumber
\int_{C^1_\epsilon}&{G(z,t)\psi_0(x-z)dz}\\
\nonumber
&=\lim_{\epsilon\to 0} \int_{\pi}^{2\pi}{G(-t+\epsilon e^{i\theta},t)\psi_0(x+t-\epsilon e^{i\theta})i\epsilon e^{i\theta}d\theta}\\
\nonumber
&=\lim_{\epsilon\to 0} \frac{-t}{\pi}\int_{\pi}^{2\pi}{
\dfrac{\psi_0(x+t-\epsilon e^{i\theta})}{\epsilon e^{i\theta}(-2t+\epsilon e^{i\theta})
} \epsilon e^{i\theta}d\theta}\\
&=\frac{1}{2} \psi_0(x+t).
\label{eq:Residues_3}
\end{flalign}
Similarly, to integrate around the pole $t$, we parameterize $z=t+\epsilon e^{i\theta}$, then $dz=i\epsilon e^{i\theta}d\theta$. By considering Eq. \ref{eq:masslesspropagator},
integrating
around the other semi-circle yields 
\begin{equation} 
\int_{C^2_\epsilon}{G(z,t)\psi_0(x-z)dz} 
=\frac{1}{2} \psi_0(x-t).
\label{eq:Residues_4}
\end{equation}
Also, the integration along the semi-circular portion $C^R$ of the contour $C$ vanishes
as the radius of integration becomes very large, i.e., as $R\to\infty,$ 
\begin{equation} 
\int_{C^R}{G(z,t)\psi_0(x-z)dz}=0,
\label{eq:Residues_5}
\end{equation}
and by considering the counter~\ref{eq:Residues_2}, this completes the proof of Eq.~\ref{eq:avantiwave}.\\
To prove Eq.~\ref{eq:Residues_1}, since there are two poles inside of the contour, we have by Cauchy’s residue theorem, 
\begin{equation} 
\int_{C}{G(z,t)\psi_0(x-z)dz}=2i\pi \left(Res(-t)+Res(t)\right).
\label{eq:Residues_6}
\end{equation}
By using Theorem~IV(p.~557) from~\cite{hildebrand1962advanced} and Equations~\ref{eq:Residues_3}, and \ref{eq:Residues_4}, 
\begin{equation} 
 Res(\pm t)=\frac{1}{2i\pi} \psi_0(x\mp t),
\label{eq:Residues_4_1}
\end{equation}
and by considering the counter~\ref{eq:Residues_2}, this completes the proof of Eq.~\ref{eq:Residues_1}.\\
Now, we consider the case $m\neq0.$ We follow a similar scenario on the same contour. Therefore, under these conditions, Equations \ref{eq:Residues_2}, \ref{eq:Residues_5}, and \ref{eq:Residues_6} work for our case and we should investigate Eq. \ref{eq:Residues_1}.
 To achieve this, we apply the analytical solution of the Salpeter propagator, as described in Eq. \ref{eq:sal_ana_summary}, for integrals   \ref{eq:Residues_3} and \ref{eq:Residues_4}. 
We emphasize the case $m\neq0$ by denoting
\begin{equation}
G^m(z,t)  =
\begin{cases}
    \frac{imt e^{imt}}{\pi\sqrt{z^2-t^2}}K_1\left(m\sqrt{z^2-t^2}\right), & \vert z\vert>t\\
    \\
    -\frac{mt e^{imt}}{2\sqrt{t^2-z^2}}H_1^{(1)}\left(-m\sqrt{t^2-z^2}\right), & \vert z\vert<t,
\end{cases}
\end{equation}
where $H_1^{(1)}$ is the Hankel function of first kind, and $K_1$ is the modified Bessel function of the second kind. 
\\
Integrating
around the first semi-circle yields
\begin{flalign}\nonumber
\int_{C^1_\epsilon}&{G^m(z,t)\psi_0(x-z)dz}\\
\nonumber
&=\lim_{\epsilon\to 0} \int_{\pi}^{2\pi}{G^m(-t+\epsilon e^{i\theta},t)\psi_0(x+t-\epsilon e^{i\theta})i\epsilon e^{i\theta}d\theta}\\
\nonumber
&=\lim_{\epsilon\to 0} \frac{-imte^{imt}}{\pi}\int_{\pi}^{2\pi} 
\frac{K_1\left(m \epsilon^{1/2}e^{i\theta/2}\sqrt{-2t+\epsilon e^{i\theta}}\right)}{ \epsilon^{1/2}e^{i\theta/2}\sqrt{-2t+\epsilon e^{i\theta}}} ...
\\
\nonumber
&\,\,\,\,\,\,\,\,\,\,\,\,\,\,\,\,\,\,\,\,\,\,\,\,\,\,\,\,\,\,\,\,\,\,\,\,\,\,\,\,\,\,\,\,\,\,\,\,\,\,\,\,...\times\psi_0(x+t-\epsilon e^{i\theta})i\epsilon e^{i\theta}d\theta
\label{eq:Residues_7}
\end{flalign}
Applying the fact
$K_1(y) \to \frac{1}{y}$ as $y \to o,$ we get
\begin{equation} 
\int_{C^\epsilon_1}{G^m(z,t)\psi_0(x-z)dz}=\frac{1}{2}\psi_0(x+t)  e^{imt} .
\label{eq:Residues_8}
\end{equation}
Similarly, integrating
around the second semi-circle yields
\begin{equation} 
\int_{C^\epsilon_2}{G^m(z,t)\psi_0(x-z)dz}=\frac{1}{2}\psi_0(x-t)  e^{imt} .
\label{eq:Residues_9}
\end{equation}
We can easily acquire  
\begin{equation}
 Res(\pm t)=\frac{e^{imt}}{2i\pi} \psi_0(x\mp t),
\label{eq:Residues_9_1}
\end{equation}
then, the $PV$ is:
\begin{flalign}\nonumber
PV  \int_{-\infty}^{\infty}&{G(y,t)\psi_0(x-y)dy}\\
&= \frac{1}{2}\left(\psi_0(x-t)+\psi_0(x+t) \right)e^{imt}.
\label{eq:Residues_10}
\end{flalign}
\section{Analytical expressions based on perturbative expansion}~\label{app:expansion}
In this section we present an analytical expansion of the salpeter integral, which serves as a perturbative expansion. It helps much to realize the analytical form of the function around e.g. the light cone, and other regions of interest, and also shows the internal structure and consistency of the analytical expressions. We follow the expansion scheme presented according to an \textit{ad hoc} weight function. In this method we multiply the integrand by a weight function $f(p,p_0)$ which makes the approximate integration feasible in an appropriate limit, and $p_0$ is a characteristic cutoff momentum dividing the integral domain to two intervals, each being treated differently in the perturbation expansion. In our case, $p_0\equiv m$ is the characteristic momentum that separates ``large'' and ``short'' $p$ values. The choice of $f(p,p_0)$ is not unique, and can be any form depending on the form of the integral, for example it can be either ($p_0\equiv m$) 
\begin{equation}
\begin{split}
f(p,p_0)&=\exp\left[-\left( \frac{p}{p_0}\right)^2\right]\\
&\text{or}\\
f(p,p_0)&=\Theta(p-p_0),
\end{split}
\end{equation}
where $\Theta(x)$ is a step function. In this paper choose the second ``cut'' function, and $p_0\equiv m$. We then cast the formula Eq.~\ref{eq:propagator_simple} to the following:
\begin{equation}
\begin{split}
    G(x,t,0,0)=&\int_0^{\infty}\text{d}p\left[f(p,p_0)+1-f(p,p_0)\right]\times\\
    &\exp [-itH(p)]\cos\left(px\right)\\
    &\equiv G_1(x,t)+G_2(x,t).
    \end{split}
\end{equation}
The integrand of function 
\begin{equation}
    G_1(x,t)=\int_0^{\infty}\text{d}p f(p,p_0)\exp [-itH(p)]\cos\left(px\right)
\end{equation}
can be expanded for small $p$ values ($p\ll p_0$), while \begin{equation}
    G_2(x,t)=\int_0^{\infty}\text{d}p \left[1-f(p,p_0)\right]\exp [-itH(p)]\cos\left(px\right)
\end{equation}
should be expanded around small $\frac{1}{p}\ll\frac{1}{p_0}$. We end up to ($\epsilon_1\equiv \frac{p}{m}$, and $g_1(mt;\epsilon_1)\equiv \exp\left[-imt \left(\sqrt{1+\epsilon_1^2}\right]-1\right)$, $\rho\equiv mx$)
\begin{equation}
\begin{split}
G_1(x,t)&=m\int_0^1 \text{d}\epsilon_1 e^{-imt}g_1(mt;\epsilon_1)\cos (\epsilon_1 mx)\\
&= m\sum_{n=0}^{\infty}e^{-imt}\frac{g_1^{(n)}\left(mt\right)}{n!}\int_0^1\text{d}\epsilon_1\epsilon_1^n \cos (\epsilon_1 mx)\\
&= me^{-imt}\sum_{n=0}^{\infty}\frac{g_1^{(n)}\left(mt\right)}{n!}\text{Re}\left(-i\frac{\partial}{\partial\rho}\right)^n\int_0^1\text{d}\epsilon_1 e^{i\rho\epsilon_1}\\
&= me^{-imt}\sum_{n=0}^{\infty}\frac{g_1^{(n)}\left(mt\right)}{n!}f_n(mx)
\end{split}
\end{equation}
where 
\begin{equation}
g_1^{(n)}(mt)\equiv \frac{\partial^n}{\partial \epsilon_1^n}\left. g_1(mt;\epsilon_1)\right|_{\epsilon_1=0},
\label{Eq:g1}
\end{equation}
and
\begin{equation}
\begin{split}
f_n(\rho)&\equiv \text{Re}\left[\left(-i\frac{\partial}{\partial\rho}\right)^n\left(\frac{e^{i\rho}-1}{i\rho}\right)\right]\\
&=\left\lbrace \begin{matrix}
(-1)^{\frac{n+1}{2}}\left(\frac{\partial}{\partial\rho}\right)^n\left(\frac{\cos \rho-1}{\rho}\right) & n=\text{odd}\\
(-1)^{\frac{n}{2}}\left(\frac{\partial}{\partial\rho}\right)^n\left(\frac{\sin \rho}{\rho}\right) & n=\text{even}
    \end{matrix}\right. .
\end{split}
\label{Eq:fnro}
\end{equation}
Note that $g_1^{(n)}=0$ for all \textit{odd} $n$ values, while for the even ones we have
\begin{equation}
\begin{split}
&g_1^{(0)}(y)=1,\\
&g_1^{(2)}(y)=-iy,\\
&g_1^{(4)}(y)=3y(i-y),\\
&g_1^{(6)}(y)=15y(-3i+3y+iy^2),\\
&g_1^{(8)}(y)=15y(15i+15y-6iy^2+y^3),\\
&...
\end{split}
\label{Eq:AB1}
\end{equation}
The asymptotic behavior of $g_1^{(2n)}$ is
\begin{equation}
  g_1^{(2n)}(y)\to\left\lbrace \begin{matrix}
   A_n^{(1)}y^n &  y\gg 1\\
   B_n^{(1)}y & y\ll 1
\end{matrix}\right. 
\end{equation}
where $A_n^{(1)}$ and $B_n^{(1)}$ are two expansion coefficients that are given in Eq.~\ref{Eq:AB1}.\\
$f_n(\rho)$ can also be written in terms of generalized hypergeometric function as follows
\begin{equation}
f_{n}(\rho)=\frac{{_1F_2}\left[\frac{n+1}{2},\left\lbrace\frac{1}{2},\frac{n+3}{2}\right\rbrace,-\frac{\rho^2}{4} \right]}{n+1}.
\label{Eq:fHyper}
\end{equation}
The explicit form of $f_n$ is
\begin{equation}
\begin{split}
&f_0(\rho)=\frac{\sin\rho}{\rho}\\
&f_1(\rho)=\frac{\cos \rho-1+\rho\sin\rho}{\rho^2}\\
&f_2(\rho)=\frac{2\rho\cos \rho+(\rho^2-2)\sin\rho}{\rho^3}\\
&f_3(\rho)=\frac{6+3(\rho^2-2)\cos \rho+\rho(\rho^2-6)\sin\rho}{\rho^4}\\
&f_4(\rho)=\frac{4\rho(\rho^2-6)\cos \rho+(24-12\rho^2+\rho^4)\sin\rho}{\rho^5}\\
&...
\end{split}
\end{equation}
Note that in the limit $\rho\to 0$ we have
\begin{equation}
\begin{split}
 f_n(\rho)&=\text{Re}\left[\left(-i\frac{\partial}{\partial \rho}\right)^n\sum_{m=1}^{\infty}\frac{1}{m!}(i\rho)^{m-1}\right]\\
 &=\text{Re}\left[\frac{n!}{(n+1)!}+i\frac{(n+1)!}{(n+2)!}\rho+\frac{(n+2)...4\times 3}{(n+3)!}\rho^2 + O(\rho^3)\right]\\
 &=\frac{1}{n+1}-\frac{\rho^2}{2(n+3)}+O(\rho^3).
\end{split}
\label{Eq:smallf}
\end{equation}
For the large $\rho$ limit one can see by inspection that 
\begin{equation}
\begin{split}
\left. f_n(\rho)\right|_{\rho\gg 1}\to a_n\frac{\sin\rho}{\rho}
\end{split}
\label{Eq:largef}
\end{equation}
where $a_n=1$ for all $n$ values. \\
For $G_2$ we have:
\begin{equation}
\begin{split}
G_2(x,t)&=\int_m^{\infty}\text{d}pe^{-ipt\sqrt{1+\left(\frac{m}{p}\right)^2}}\cos px\\
&=\int_m^{\infty}\text{d}pg_2(mt,\epsilon_2)e^{-ipt}\cos px\\
&=m\sum_{n=0}^{\infty}\frac{g_2^{(n)}(mt)}{n!}\int_1^{\infty}\text{d}\tilde{p}\tilde{p}^{-n}e^{-i\tilde{p}mt}\cos (\tilde{p}mx)
\end{split}
\label{Eq:g2}
\end{equation}
where $\tilde{p}\equiv p/m$, $\epsilon_2\equiv \frac{1}{\epsilon_1}= \frac{m}{p}$, and
\begin{equation}
\begin{split}
g_2(mt,\epsilon_2)&\equiv \exp\left[-i\frac{mt}{\epsilon_2}\left(\sqrt{1+\epsilon_2^2}-1\right)\right]\\
&=\sum_{n=0}^{\infty}\frac{1}{n!}g_2^{(n)}(mt)\epsilon_2^n=\sum_{n=0}^{\infty}\frac{1}{n!}g_2^{(n)}(mt)\tilde{p}^{-n},
\end{split}
\end{equation}
and 
\begin{equation}
g_2^{(n)}(mt)\equiv \left. \frac{\partial^n}{\partial \epsilon_2^n}g_2(mt,\epsilon_2)\right|_{\epsilon_2=0}
\end{equation}
is the coefficients of the Taylor expansion. Notice that when $\epsilon_2\rightarrow 0$, $\epsilon_2^{-1}\left(\sqrt{1+\epsilon_2^2}-1\right)\rightarrow 0$. \\
The coefficients are easily obtained by inspection as follows:
\begin{equation}
\begin{split}
&g_2^{(0)}(y)=(-i)^0,\\
&g_2^{(1)}(y)=(-i)^1\frac{y}{2},\\
&g_2^{(2)}(y)=(-i)^2\frac{y^2}{2^2},\\
&g_2^{(3)}(y)=(-i)^3\frac{y}{2^3}(3!+y^2),\\
&g_2^{(4)}(y)=(-i)^4\frac{y^2}{2^4}(4!+y^2),\\
&g_2^{(5)}(y)=(-i)^5\frac{y}{2^5}(2\times 5!+\frac{1}{2}5!y^2+y^4),\\
&g_2^{(6)}(y)=(-i)^6\frac{y^2}{2^6}(\frac{5}{2}\times 6!+5!y^2+y^4),\\
&g_2^{(7)}(y)=(-i)^7\frac{y}{2^7}(5\times 7!+\frac{3}{2}7!y^2+\frac{7}{4}5!y^4+y^6),\\
&g_2^{(8)}(y)=(-i)^8\frac{y^2}{2^8}(7\times 8!+\frac{14}{3}7!y^2+\frac{7}{15}6!y^4+y^6)...
\end{split}
\label{Eq:g2Expansion}
\end{equation}
\newline
\par
Note that 
\begin{equation}
  g_2^{(n)}\to\left\lbrace \begin{matrix}
   (-i)^n\frac{y^n}{2^n} &  y\gg 1\\
   (-i)^n\frac{n!}{2^n}(A_n^{(2)})y^{m_n} & y\ll 1
\end{matrix}\right.
\end{equation}
where $m_n=2$ and $1$ for $n$ even and odd respectively, and $A_n^{(2)}$ is an expansion coefficient given in Eq.~\ref{Eq:g2Expansion}.
To process the Eq.~\ref{Eq:g2} we use the fact that
\begin{equation}
\begin{split}
I_n(mt,mx)&\equiv\int_1^{\infty}\text{d}\tilde{p}\tilde{p}^{-n}e^{-i\tilde{p}mt}\cos (\tilde{p}mx)\\
&=\frac{1}{2}\text{Ei}_n\left[im(t-x)\right]+\frac{1}{2}\text{Ei}_n\left[im(t+x)\right],
\end{split}
\end{equation}
where $\text{Ei}_n(y)\equiv \int_1^{\infty}z^{-n}e^{-yz}\text{d}z$ is the exponential integral function. Therefore, we find that
\begin{equation}
\begin{split}
G_2(x,t)=\frac{m}{2}\sum_{n=0}^{\infty}\frac{g_2^{(n)}(mt)}{n!}&\left(\text{Ei}_n\left[im(t-x)\right]\right.\\
& \left. +\text{Ei}_n\left[im(t+x)\right]\right)
\end{split}
\label{Eq:g2_final}
\end{equation}
Note that the singularities of the propagator are due to the presence of the exponential integral functions in $G_2(x,t)$. Given that 
\begin{equation}
\begin{split}
\text{Ei}_0(z) &=\frac{e^{-z}}{z},\\
\lim_{z\to 0} \text{Ei}_1(z) &=-\gamma-\ln z,\\
\lim_{z\to 0} \text{Ei}_n(z) &= \frac{1}{n-1} \ \ , \ n>1   
\end{split}
\end{equation}
where $\gamma$ is an Euler–Mascheroni constant, we find that in the limit $x\to t$, $G_2$ becomes singular according to (the same happens for $x=-t$)
\begin{equation}
G_2(x,t)\to \frac{1}{2i(t-x)}+\frac{im^2t}{4}\ln [im(t-x)].
\label{Eq:singularity}
\end{equation}
\par
Finally, we find
\begin{widetext}
\begin{equation}
 G(x,t;0,0)= m\sum_{n=0}^{\infty}\frac{g_1^{(n)}\left(mt\right)}{n!(n+1)} {_1F_2}\left[\frac{n+1}{2},\left\lbrace\frac{1}{2},\frac{n+3}{2}\right\rbrace,-\frac{m^2x^2}{4} \right]+  \frac{m}{2}\sum_{n=0}^{\infty}\frac{g_2^{(n)}(mt)}{n!}\left(\text{Ei}_n\left[im(t-x)\right]+\text{Ei}_n\left[im(t+x)\right]\right).
\end{equation}
\end{widetext}
This expansion is singular at the Dirac cone, i.e. $x=t$ and $x=-t$ due to the presence of the exponential integral functions. More precisely, from Eq.~\ref{Eq:singularity} we have:
\begin{equation}
\begin{split}
\left. G(x,t;0,0)\right|_{x\to \pm t}& =\frac{1}{2i(t\mp x)}+\frac{im^2t}{4}\ln [im(t\mp x)]\\
&=\frac{1}{2z}+\frac{m^2z}{8}\ln (mz),
\end{split}
\label{Eq:divergence}
\end{equation}
where we have defined $z\equiv \tau +ix$, $\bar{z}\equiv \tau -ix$, and $\tau\equiv it$ is the Wick-rotated (imaginary) time. In the above equation we used $z=\tau+ix\approx 2it$ in the limit $x\to t$ (the same holds for $x\to -t$ with $z$ replaced by $\bar{z}$). All in all we find that
\begin{equation}
\begin{split}
&\left. G(x,t;0,0)\right|_{x\to t}+\left.G(x,t;0,0)\right|_{x\to -t} =\\
&G(x,t,m=0)+\frac{m^2}{8}\left[z\ln (mz)+\bar{z}\ln (m\bar{z})\right]. 
\end{split}
\end{equation}
In Appendix~\ref{app:scaling} we discuss the scaling properties of the propagator. Especially we show that the integral is proportional to the modified Bessel functions of the second type. More precisely,
\begin{equation}
\begin{split}
G(x,t) & =\frac{imt e^{imt}}{\pi\sqrt{x^2-t^2}}K_1\left(m\sqrt{x^2-t^2}\right),\ x^2\ge t^2\\
& = -\frac{mt e^{imt}}{2\sqrt{t^2-x^2}}H_1^{(1)}\left(-m\sqrt{t^2-x^2}\right),\ x^2<t^2,
\end{split}
\label{Eq:MainApp}
\end{equation}
where $K_1(y)$ is the modified Bessel function of second type, and $H_1^{(1)}$ is a Hankel function of first type. This can also be written in terms of the imaginary time $\tau$ as follows
\begin{equation}
\begin{split}
G(x,\tau) & =\frac{m}{\pi}\cos\theta e^{mr\cos\theta} K_1\left(mr\right),\  x^2\ge t^2\\
& = \frac{im}{2}\cos\theta e^{mr'\cos\theta}H_1^{(1)}\left(mr'\right), \ x^2<t^2,
\end{split}
\label{Eq:Main1}
\end{equation}
where $r\equiv \sqrt{x^2+\tau^2}$, $r'\equiv \sqrt{t^2-x^2}=ir$, and $\theta\equiv \tan^{-1}\left(\frac{x}{\tau}\right)$. For the stochastic process (SP) applications, one may use  the first branch, as follows
\begin{equation}
G_{\text{SP}}(r,\theta)=\frac{m}{\pi}\cos\theta e^{mr\cos\theta} K_1\left(mr\right).
\label{Eq:SP}
\end{equation}
This function has been shown in Fig.~\ref{fig:Plot1}
\begin{figure}
\includegraphics[width=0.9\linewidth]{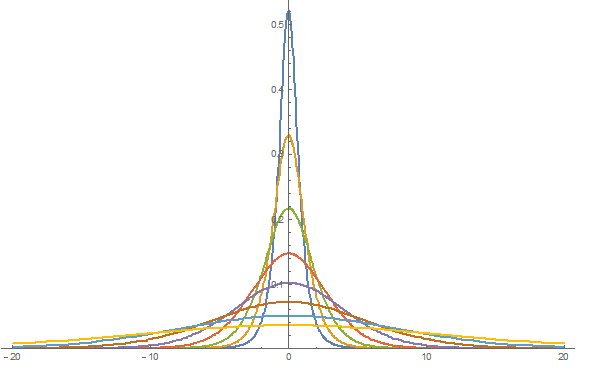}
\caption{A plot for Eq.~\ref{Eq:SP} for $\tau=2^n$, $n=0,1,2,3,4,5,6,7$ (from top to bottom).}
\label{fig:Plot1}	
\end{figure}

\section{Differential Equation}~\label{app:differential}
Here we present the connection of the Salpeter equation to the Klein-Gordon differential equation. By differentiating the Eq.~\ref{eq:propagator_solve} with respect to $x$ and $t$ One easily finds ($x,t\ne 0$)
\begin{equation}
\left(-\partial_x^2+\partial_t^2+m^2 \right)G\left(x,t,m\right)=0,
\end{equation}
which can be understood using the fact that $H\equiv i\partial_t$ and $p\equiv -i\partial_x$. Above, we considered a complex time $t+i0^+$ for control the integral limits. In an extreme situation, one may Wick-rotate the time, so that $\tau=it$ describes a new \textit{coordination} which is treated like $x$ as follows ($x,\tau\ne 0$)
\begin{equation}
\left(\nabla^2-m^2 \right)G^{(\tau)}=0,
\end{equation}
which is a homogeneous Helmholtz equation in two dimensions for $x,\tau\ne 0$, and $G^{(\tau)}(x,\tau,m)\equiv G(x,-i\tau,m)$, and $\nabla^2\equiv \partial_x^2+\partial_{\tau}^2$. Note that this equation respects the scaling relation given in Eq.\ref{Eq:scaling}. The resulting integral after the Wick rotation describes also the relativistic probability density of stochastic processes. In terms of this quantity, we have 
\begin{equation}
G^{(\tau)}(x,\tau,m=0)=\frac{\tau}{\pi(x^2+\tau^2)}+\frac{1}{2}\delta(x-i\tau)+\frac{1}{2}\delta(x+i\tau),
\label{Eq:Imzero2}
\end{equation}
i.e. when $x$ is real, only the first term remains, which is the Cauchy distribution. Using the identity
\begin{equation}
\lim_{\epsilon\to 0}\frac{1}{\pi}\frac{\epsilon}{x^2+\epsilon^2}=\delta(x),
\end{equation}
we find that 
\begin{equation}
\begin{split}
\nabla^2G^{(\tau)} (x,\tau,m=0) &=-\frac{8\tau}{x^2+\tau^2} \lim_{\epsilon\to 0}\frac{1}{\pi}\frac{\epsilon^2}{(x^2+\tau^2+\epsilon^2)^2}\\
&=-4 f(x,\tau)\delta(x)\delta(\tau),  
\end{split}
\end{equation}
where 
\begin{equation}
\begin{split}
f(x,\tau)\equiv &\frac{2 \tau}{(x^2+\tau^2)}=\frac{1}{\tau+ix}+\frac{1}{\tau-ix}\\
&=2\pi G^{(\tau)}(x,\tau,m=0)
\end{split}
\end{equation}
We immediately conclude that the same happens for general $m$
\begin{equation}
\begin{split}
\left(\nabla^2-m^2\right)G^{(\tau)}(x,\tau,m) =-4 f(x,\tau)\delta(x)\delta(\tau)
\end{split}
\label{Eq:master}
\end{equation}
Defining $r\equiv\sqrt{x^2+\tau^2}$ and $\theta\equiv \arctan\frac{x}{\tau}$, we rewrite it as ($r,\theta\ne 0$)
\begin{equation}
G_{rr}^{(\tau)}+\frac{1}{r}G_r^{(\tau)}+\frac{1}{r^2}G_{\theta\theta}^{(\tau)}-m^2G^{(\tau)}=0,
\label{Eq:rad_master}
\end{equation}
where $G_{rr}^{(\tau)}$ means double derivative with respect to $r$ and etc. Then, using the method of separation of variables $G^{(\tau)}\equiv R(r)\Theta(\theta)$, we find
\begin{equation}
\begin{split}
&\Theta''+n^2\Theta=0\\
&r^2R''+rR'-\left((mr)^2+n^2\right)R=0,
\end{split}
\end{equation}
from which one finds 
\begin{equation}
\Theta=a_n\sin(n\theta)+b_n\cos(n\theta)
\end{equation}
and $R$ is a modified Bessel function of the second kind
\begin{equation}
R(r)=K_n(mr)=\int_{0}^{\infty}\exp[-mr\cosh t]\cosh (nt)\text{d}t.
\end{equation}
Note that we have two independent radial solutions and the only one which is infinity at $r=0$ and regular at $r\to \infty$ is this term. We find eventually
\begin{equation}
G^{(\tau)} =\frac{1}{2\pi} \sum_nK_n(mr)\left(a_n\sin(n\theta)+b_n\cos(n\theta)\right)
\end{equation}
where $a_n$ and $b_n$ should be found using Eq.~\ref{Eq:Imzero2} for $m=0$. Given that 
\begin{equation}
\begin{split}
& K_n(y)\to \sqrt{\frac{\pi}{2}}\frac{\exp(-y)}{\sqrt{y}} , \ \ y\to \infty\\
& K_0(y)\to -\gamma -\ln \frac{y}{2} + O(y^2) \ \ y\to 0\\
&K_1(y)\to \frac{1}{y}+ \frac{y}{2}\ln \frac{y}{2} + O(y), \ \ y\to 0\\
&K_n(y)\to (-1)^{n-1}\left(\frac{y}{2}\right)^n\ln \frac{y}{2}\left(\frac{1}{n!}+O(y^2)\right), \ n\ge 2, y\to 0.
\end{split}
\label{Eq:asymptotic}
\end{equation}
Comparing this result with Eq.~\ref{Eq:divergence}, we notice that the singularity of the integral for $r\to 0$ (equivalent to $x=\pm t$) is by $1/r$ and $r\ln r$, so that the only term that contributes in the summation is $n=1$. More precisely, Eq.~\ref{Eq:divergence} gives us 
\begin{equation}
\begin{split}
G\to &\frac{m}{2}\left[\frac{1}{mr}e^{-i\theta}+\frac{1}{2}\frac{mre^{i\theta}}{2}\ln(mr)\right]\\
+ & \frac{m}{2}\left[\frac{1}{mr}e^{i\theta}+\frac{1}{2}\frac{mre^{-i\theta}}{2}\ln(mr)\right]\\
&=m\left[\frac{1}{mr}+\frac{1}{2}\frac{mr}{2}\ln(mr)\right]\cos\theta
\end{split}
\end{equation}
Then, we find that $a_1=m$, and $b_1=0$, and all the other $a_n,b_n$ ($n\ne 1$) are zero. The following integral identity confirms the above result 
\begin{equation}
\begin{split}
    \int_{0}^{\infty} &{dx \,\, \text{exp}(-\alpha \sqrt{x^2+\beta^2})\cos\gamma x } \\
    &= \frac{\alpha \beta}{\sqrt{\alpha^2+\gamma^2}}K_1(\beta\sqrt{\alpha^2+\gamma^2}), \text{Re}[\alpha]>0, \text{Re}[\beta]>0.
    \label{Eq:AnalyticA}
\end{split}
\end{equation}
\\
Using this, we find that
\begin{equation}
\begin{split}
G(x,t) & =\frac{imt e^{imt}}{\pi\sqrt{x^2-t^2}}K_1\left(m\sqrt{x^2-t^2}\right),\ x^2>t^2\\
& = -\frac{mt e^{imt}}{2\sqrt{t^2-x^2}}H_1^{(1)}\left(m\sqrt{t^2-x^2}\right),\ x^2<t^2,
\end{split}
\label{Eq:MainApp_1}
\end{equation}
where $H_1^{(1)}$ is a Hankel function of first kind.
Using new complex variables $z\equiv \tau+ix$ and $\bar{z}\equiv \tau-ix$, we can simplify the notation. First we note that $4\partial\bar{\partial}=\partial_{\tau}^2+\partial_x^2$, where $\partial \equiv \partial_z$, and $\bar{\partial}\equiv\partial_{\bar{z}}$. Equation~\ref{Eq:master} gives rise to
\begin{equation}
\left(4\partial\bar{\partial}-m^2\right)G^{(\tau)}(z,\bar{z},m)=-2 \left(\frac{1}{z}+\frac{1}{\bar{z}}\right)\delta^2(z)
\end{equation}
where $\delta^2(z)\equiv \delta(x)\delta(t)$, and we have used $\partial \frac{1}{\bar{z}}=\bar{\partial}\frac{1}{z}=\delta^2(z)$. For the $m=0$ case, $G^{(\tau)}$ is a summation of the holomorphic and anti-holomorphic terms, i.e.
\begin{equation}
G^{(\tau)}(z,\bar{z},m=0)=G_1(z)+G_2(\bar{z})=\frac{1}{\pi}\left[\frac{1}{z}+\frac{1}{\bar{z}}\right]
\end{equation}
When $m\ne 0$ however, it is not decomposable to these two parts. In this case, Eq.~\ref{Eq:AnalyticA} tells us that
\begin{equation}
G^{(\tau)}(z,\bar{z},m)\propto\frac{1}{\sqrt{z\bar{z}}}K_1(m\sqrt{z\bar{z}}).
\end{equation}

\end{document}